\documentclass[
reprint,amsfonts, amssymb, amsmath,  showkeys,pra, superscriptaddress, twocolumn,longbibliography,nofootinbib,notitlepage]{revtex4-2}

\usepackage{tikz}
\usetikzlibrary{fadings,arrows.meta}

\usepackage{graphicx}
\usepackage{array}
\usepackage{xcolor}
\usepackage{float}
\usepackage{physics}
\usepackage{enumerate}
\usepackage{enumitem}
\usepackage{lipsum}
\usepackage{amsfonts}
\usepackage{mathtools}
\usepackage{dsfont}
\usepackage{bbm}
\usepackage[normalem]{ulem}

\usepackage{booktabs}
\usepackage{url}
\usepackage{braket}

\usepackage{amsthm}
\usepackage{bm}
\usepackage[T1]{fontenc} 

\definecolor{C4}{RGB}{202, 21, 81}
\usepackage[colorlinks=true,citecolor=C4,linkcolor=C4,urlcolor=C4]{hyperref}
 
\usepackage{thm-restate}

\usepackage{tikz}
\usetikzlibrary{fadings}
\usepackage{mathrsfs}

\newcommand{\sdket}[1]{| #1 \rangle\!\rangle}

\definecolor{flowNavy}{HTML}{18324A}
\definecolor{flowBlue}{HTML}{061A40}
\definecolor{flowTeal}{HTML}{247BA0}  
\definecolor{flowGreen}{HTML}{70C1B3}  
\definecolor{flowLime}{HTML}{9AD5B3} 
\definecolor{flowRed}{HTML}{BA3B46}
\definecolor{flowInk}{HTML}{23313C}
\definecolor{flowMuted}{HTML}{5F6C75}
\definecolor{flowBluePale}{HTML}{ECF2FD} 
\definecolor{flowTealPale}{HTML}{EEF7FB} 
\definecolor{flowGreenPale}{HTML}{F1F9F8} 
\definecolor{flowLimePale}{HTML}{F1F9F4} 
\definecolor{flowRedPale}{HTML}{FAEFF0}

\providecommand{\learningflowbullet}[1]{%
  \par\noindent
  \hangindent=.2em
  \hangafter=0
  \makebox[1.2em][l]{$\bullet$}#1\par
}

\usepackage{comment}

\theoremstyle{remark}

\newcommand*{\ot}{\otimes}

\DeclareMathOperator*{\expect}{\mathbb{E}}

\newcommand{\x}{\boldsymbol{x}}
\newcommand{\s}{\boldsymbol{s}}

\newcommand{\mst}{\mathsf{T}}

\definecolor{antonio}{rgb}{.2,.5,.1}

\renewcommand{\arraystretch}{1.5}

\newcommand{\mcf}{\mathcal{F}}
\newcommand{\mcg}{\mathcal{G}}

\newcommand{\mcc}{\mathcal{C}}

\newcommand{\mcx}{\mathcal{X}}

\newcommand{\mco}{\mathcal{O}}

\newcommand{\mce}{\mathcal{E}}

\newcommand{\mbc}{\mathbb{C}}

\newcommand{\mbz}{\mathbb{Z}}
\newcommand{\mbf}{\mathbb{F}}
\newcommand{\mbe}{\mathbb{E}}

\newcommand{\mbu}{\mathbb{U}}

\newcommand{\id}{\mathds{1}}

\def\tt{^{\otimes t}}

\newcommand{\stab}{\text{Stab}}

\def\a{\alpha}
\def\b{\beta}

\def\d{\delta}
\def\sg{\sigma}

\def\ad{^{\dagger}}

\usepackage{float}
\floatstyle{ruled}
\newfloat{algorithm}{tbp}{loa}
\floatname{algorithm}{Algorithm}

\newcounter{algstep}

\newcounter{algsubstep}

\begin{document}
\title{Sample-optimal learning of stabilizer states} 

\author{Rebecca Chang}
\affiliation{Theoretical Division, Los Alamos National Laboratory, Los Alamos, New Mexico 87545, USA}
\affiliation{Massachusetts Institute of Technology, Cambridge, Massachusetts 02139, USA}
\affiliation{Department of Computer Science, University of Warwick, Coventry, UK}

\author{Matthias C. Caro}
\affiliation{Department of Computer Science, University of Warwick, Coventry, UK}

\author{Mart\'{i}n Larocca}
\affiliation{Theoretical Division, Los Alamos National Laboratory, Los Alamos, New Mexico 87545, USA}
\affiliation{Quantum Science Center, Oak Ridge, TN 37931, USA}

\author{Maxwell West}
\affiliation{Theoretical Division, Los Alamos National Laboratory, Los Alamos, New Mexico 87545, USA}
\affiliation{Quantum Science Center, Oak Ridge, TN 37931, USA}

\begin{abstract}
It is well-known that learning a pure $n$-qubit   stabilizer state $\ket\psi$ both requires, and can be accomplished with, access to a number of copies of $\ket\psi$ linear in $n$. However, the precise constant coefficient of this  scaling does not appear to have been determined. Here we prove that  $L_\delta(n)$, the smallest number of copies from which a quantum procedure can identify any stabilizer state  with failure probability at most $0<\delta<1/8$, satisfies   
$n+\lceil\log_2(1/\delta)\rceil-3\leq L_\delta(n)\leq n+\left\lceil\log_2(1/\delta)\right\rceil+4$. We present a polynomial-time quantum learning algorithm that saturates this bound, achieving a constant factor improvement in sample-complexity over previously known approaches. As an immediate corollary, we obtain via the Choi-Jamiolkowski isomorphism an algorithm for learning an unknown $n$-qubit Clifford unitary from $2n+\left\lceil\log_2(1/\delta)\right\rceil+4$ queries,  the $n$-dependence of which we show to be optimal. Our proof technique, which  involves Fourier analysis on the abelian group $\mbz_4^n \times \mathbb{F}_2^{n(n-1)/2}$, seems to be qualitatively different to previous approaches to stabilizer state learning, and may   be of some independent interest; in particular, it admits  natural generalisations to further problems in quantum learning theory. 
\end{abstract}

\maketitle


\tableofcontents

\section{Introduction}\label{sec:intro}
The Clifford group is ubiquitous in quantum information theory because it constitutes a highly non-trivial family of classically simulable dynamics with remarkable statistical properties~\cite{aaronson2004improved,zhu2016clifford}. Its orbit on a computational basis state, the set of so-called \textit{stabilizer states}, is likewise pervasive and for example important in the theory of  quantum error correction \cite{gottesman1997stabilizer, roffe2019quantum}. Their algebraic features  also make them a popular test case for quantum learning theory, leading to several natural questions. For example, the property testing of ``stabilizerness'' (that is, determining whether an unknown  state is a stabilizer state) has been shown to be achievable given access to a constant  number of copies of the state~\cite{gross2021schur,bittel2026complete}. \\

In particular, the \textit{learnability} of stabilizer states has  been investigated with some enthusiasm (see Table~\ref{tab:prior_learners})~\cite{gottesman2008identifying,montanaro2017learning,allcock2025reconquering,hinsche2026abelian,arunachalam2026optimal,arunachalam2022optimal}; as opposed to the intractable task of learning a general quantum state \cite{scharnhorst2025optimal,bruss1999optimalstate, haah2017sample, o2016efficient}, an $n$-qubit stabilizer state can be learnt with a number of samples merely linear in $n$~\cite{gottesman2008identifying,montanaro2017learning}. This linear scaling is asymptotically  optimal   for information-theoretic reasons: there are $2^{n^2/2+\mco(n)}$ stabilizer states~\cite{aaronson2004improved}, so specifying one requires $\sim n^2/2$ bits of information. As an $n$-qubit quantum state cannot convey more than $n$ bits of information by Holevo's bound \cite{holevo1973bounds}, one needs to process at least $n/2$ copies.
An algorithm for actually doing so was first developed  by Aaronson and Gottesman, who in fact presented both an algorithm which used $\mco(n^2)$ samples but only single-copy measurements, and an $\mco(n)$-copy procedure with a collective measurement~\cite{gottesman2008identifying}. The situation was later improved upon by Montanaro, who  gave an $\mco(n)$-copy algorithm via Bell sampling, with measurements on at most two copies at a time~\cite{montanaro2017learning}. A more recent $\mco(n^2)$-copy algorithm, by Arunachalam et al~\cite{arunachalam2022optimal}, managed to employ only separable measurements. 
Other recent works have extended things in several directions, such as learning more broadly    qudit stabilizer states~\cite{allcock2025reconquering}, connecting the problem to the abelian state hidden subgroup problem~\cite{hinsche2026abelian}, restricting the allowed quantum memory~\cite{arunachalam2026optimal}, and learning states prepared by Clifford+$T$ circuits with low $T$ count~\cite{grewal2025efficient, hangleiter2024bell, leone2024learningt}. \\

Despite these various developments, however, it appears that  the optimal constant prefactor in the  sample-complexity scaling of stabilizer state learning has remained undetermined. In this work, we show that this optimal leading coefficient is exactly one; indeed, for a fixed failure probability $0<\delta<1/8$, the minimum number of copies required is, up to a small additive constant, exactly $n+ \log1/\delta $. Our proof of the corresponding upper bound involves the construction of   an explicit polynomial-time algorithm, which achieves a constant-factor improvement over, for example, the Bell sampling algorithm given by Montanaro~\cite{montanaro2017learning}, which requires $3n+\mco(\log(1/\delta))+2$ samples (and which can be easily reduced to $2n+\mco(\log(1/\delta))+3$ if one is willing to make a slightly more general measurement, see Appendix~\ref{sec:bell}). \\

\begin{figure*}[t]
    \centering

    \begin{tikzpicture}[
  x=1cm,
  y=1cm,
  font=\sffamily\footnotesize,
  text=flowInk,
  line join=round,
  line cap=round,
  box/.style={
    rectangle,
    rounded corners=2mm,
    draw=flowBlue,
    fill=flowBluePale,
    line width=0.8pt,
    minimum height=18mm,
    align=center,
    inner sep=2.5mm
  },
  compress/.style={
    box,
    draw=flowTeal,
    fill=flowTealPale
  },
  fourier/.style={
    box,
    draw=flowGreen,
    fill=flowGreenPale
  },
  output/.style={
    box,
    draw=flowLime,
    fill=flowLimePale,
    line width=1pt
  },
  retry box/.style={
    rectangle,
    rounded corners=1.5mm,
    draw=flowRed,
    fill=flowRedPale,
    line width=0.7pt,
    minimum height=9mm,
    text width=37mm,
    align=center,
    inner sep=1.5mm,
    font=\sffamily\scriptsize
  },
  main arrow/.style={
    -{Latex[length=2.2mm,width=1.5mm]},
    draw=flowNavy,
    line width=0.75pt
  },
  retry arrow/.style={
    -{Latex[length=2mm,width=1.4mm]},
    draw=flowBlue,
    line width=0.7pt
  },
  edge label/.style={
    inner sep=0pt,
    font=\sffamily\scriptsize\bfseries
  }
]

\node[anchor=west, font=\sffamily\small, text=flowInk] at (0.15,7.15) { 
};

\node[box, text width=42mm, minimum height=26mm, align=left] (search) at (2.55,3.45) {
  \makebox[\linewidth][c]{\textbf{1.\ Clifford-chart search}}\par\vspace{1.5mm}
  \learningflowbullet{Apply a random Clifford $C^{\ot t}$ }
  \vspace{1mm}
  \learningflowbullet{Reversibly test whether basis\\  \ \hspace{2.375mm} strings affinely span
  $\mathbb F_2^n$}
};

\node[compress, text width=39mm, align=left] (compression) at (8.25,4.25) {
  \makebox[\linewidth][c]{\textbf{2.\ Isotypic compression}}\par\vspace{0.5mm}
  \learningflowbullet{Compute  character $h$}
  \vspace{1mm}
  \learningflowbullet{Compress along isotypics:}
  \vspace{1mm}
  \noindent\makebox[\linewidth][c]{\scriptsize
    $\lvert\mathcal F_h\rangle\lvert h\rangle
    \mapsto\lvert0^{nt}\rangle\lvert h\rangle$
  }\par
};

\node[fourier, text width=34mm, align=left] (qft) at (13.05,4.25) {
  \makebox[\linewidth][c]{\textbf{3.\  Fourier transform}}\par\vspace{0.5mm}
  \learningflowbullet{Want label   $(\alpha,\beta)\in \Upsilon$}
  \vspace{1mm}
  \learningflowbullet{$\Upsilon$-QFT\textsuperscript{-1}\hspace{-0.45mm} recovers\hspace{-0.5mm} $(\alpha,\beta)$}
};

\node[output, text width=25mm] (tableau) at (16.90,4.25) {
  \textbf{4.\ Output}\\[0.5mm]
  \vspace{2mm}
  Calculate   tableaux of
  $C^\dagger\lvert\phi_{\alpha,\beta}\rangle$
};

\node[retry box, text width=40mm] (retry) at (8.25,2.5) {
  Uncompute rank and $C$, go to  next trial
  (up to $r$ trials)
};

\draw[main arrow, dashed]
  ([yshift=8.mm]search.east)
  -- node[edge label, above=1mm, text=flowInk] {accept}
  (compression.west);
\draw[main arrow, dashed]
  ([yshift=-9.4mm]search.east)
  -- node[edge label, above=1mm, text=flowInk] {reject}
  (retry.west);
\draw[main arrow] (compression.east) -- (qft.west);
\draw[main arrow] (qft.east) -- (tableau.west);

\draw[retry arrow]
  (retry.south) -- (8.25,1.25) -- (2.55,1.25) -- (search.south);

\pgfresetboundingbox
\path[use as bounding box] (0,0.25) rectangle (18.60,7.50);

\end{tikzpicture}
\caption{Overview of our stabilizer-learning algorithm, which has four core stages. In stage one we reduce the problem to that of learning a stabilizer state which has support on all computational basis strings; such ``full-rank'' states enjoy certain nice algebraic properties which we are able to exploit. In particular, they are exactly the orbit on the state $\ket{+}^{\ot n}$ of a certain representation of the abelian group $ \Upsilon \cong \mbz_4^n \times \mathbb{F}_{\!2}^{\,\smash[b]{n(n-1)/2}}  $. The test (correctly) rejecting a rank-deficient state is non-destructive, so that the test can be uncomputed and another trial performed; the test (incorrectly) rejecting a full-support causes the algorithm to fail (not depicted in the figure).
In stage two we ``compress'' our state to one which lives in the group algebra of $\Upsilon$; doing this efficiently is one of the key technical   challenges. In stage three we perform an inverse QFT over $\Upsilon$, which we show reveals (with high probability) exactly the information needed to recover our initial unknown state. Finally, a little classical postprocessing suffices to reconstruct the stabilizer tableaux of the original state from the data $(C,(\a,\b))$.}
    \label{fig:1}
\end{figure*}

Somewhat more formally, for $0\leq \delta \leq 1$, let $L_\delta(n)$ be the smallest number of copies from which some quantum procedure (which can include adaptivity, ancillae, collective measurements etc.) can identify any state in $\stab_n$, the set of $n$-qubit stabilizer states, with worst-case failure probability at most $\delta$. 
Our main result is:
\begin{restatable}{thm}{main}\label{thm:main}
For every $0<\delta<1/8$,
\begin{equation}
    n+\lceil\log_2(1/\delta)\rceil-3\leq L_\delta(n)\leq n+\left\lceil\log_2(1/\delta)\right\rceil+4 .
\end{equation}
Furthermore, the upper bound is achievable by a quantum algorithm which runs in time $\mco(n(n+\log(1/\delta))^4)$.
\end{restatable}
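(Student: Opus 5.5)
The plan is to prove the two inequalities of Theorem~\ref{thm:main} separately. Both rest on one structural fact, the one behind Fig.~\ref{fig:1}. The ``full-rank'' stabilizer states, those with support on every computational basis string, are exactly the orbit $\{\ket{\phi_{\a,\b}}\}$ of $\ket{+}^{\ot n}$ under a diagonal representation of the abelian group $\Upsilon\cong\mbz_4^n\times\mbf_2^{n(n-1)/2}$. The $\mbz_4$ part acts through $i^{\a\cdot x}$ and the $\mbf_2$ part through $(-1)^{\sum_{i<j}\b_{ij}x_ix_j}$. This class is a constant fraction of $\stab_n$. On $t$ copies, a basis string $(x^{(1)},\dots,x^{(t)})$ therefore carries the character of $\Upsilon$ labelled by
\[
h=\Bigl(\textstyle\sum_k x^{(k)} \bmod 4,\ \bigl(\sum_k x^{(k)}_ix^{(k)}_j \bmod 2\bigr)_{i<j}\Bigr).
\]
Equivalently, $h$ is a $\mbz_4$-lifted symmetric ``Gram matrix'' $X^{\mst}X$ of the $t\times n$ binary matrix $X$. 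Distinguishing $\ket{\phi_{\a,\b}}^{\ot t}$ is thus a group-covariant discrimination problem over $\Upsilon$, and its optimal success probability is controlled by which characters $h$ occur and with what weight.

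\textbf{Lower bound.} I would restrict to the full-rank subfamily, which suffices because a worst-case learner is in particular an average-case learner on this subfamily. For an orbit of an abelian group, the average success probability of any measurement on $\ket{\phi_{\a,\b}}^{\ot t}$ is at most $|\{h \text{ realisable with } t \text{ rows}\}|/|\Upsilon|$. This follows from the pretty-good-measurement/covariant-measurement formula, since the span of the orbit is the direct sum of the realisable character spaces, each of dimension one in the group algebra. The problem then reduces to counting, namely counting the $\mbz_4$-lifted symmetric forms that are sums of at most $t$ lifted rank-one forms $xx^{\mst}$. I expect to show this fraction is at most $1-c\,2^{-(t-n)}$ for an explicit constant $c\geq 1$. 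The natural route is to show that a uniformly random lifted form has ``$\mbz_4$-rank'' exceeding $n+m$ with probability at least roughly $2^{-m}$. This should come from the parity/diagonal constraints: the diagonal mod 4 together with the mod-2 off-diagonal part forces extra rows, one bit at a time. The requirement $1-\delta\le 1-c2^{-(t-n)}$ then gives $t\ge n+\log_2(1/\delta)-3$ once the constants are tracked. This counting, specifically getting the exponent $t-n$ with only an additive $-3$ loss, is the step I expect to be the main obstacle. I would attack it by a recursive count, row by row, of representable forms over $\mbz_4$.

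\textbf{Upper bound.} I would follow the four stages of Fig.~\ref{fig:1} with $t=n+\lceil\log_2(1/\delta)\rceil+4$ copies.
\begin{enumerate}
\item Apply $C^{\ot t}$ for a random Clifford $C$. With constant probability $C\ket\psi$ is full rank. Coherently test whether the $t$ basis strings affinely span $\mbf_2^n$; this test is reversible, so on rejection it can be uncomputed and a fresh $C$ tried, for $r=\mco(\log(1/\delta))$ trials. A full-rank state passes the test except with probability about $2^{-(t-n)}$.
\item Compute $h$ into a register and apply the isometry $\ket{\mcf_h}\ket h\mapsto\ket{0^{nt}}\ket h$. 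This compresses each isotypic component to its normalised uniform superposition and leaves the state $\sum_h \sqrt{p_h}\,\chi_{\a,\b}(h)\ket h$ in the group algebra of $\Upsilon$.
\item Apply $\Upsilon$-QFT$^{-1}$. Its output is $(\a,\b)$ with probability $\bigl(\sum_h\sqrt{p_h}\bigr)^2/|\Upsilon|$ summed appropriately. Conditioned on an affine spanning set, the distribution $p_h$ should be close enough to uniform on all of $\widehat\Upsilon$ for this to be $1-\mco(2^{-(t-n)})$. This near-uniformity estimate is the dual of the counting used in the lower bound, and I would prove it with the same row-by-row argument.
\item Classically compute the tableau of $C^\dagger\ket{\phi_{\a,\b}}$.
\end{enumerate}
A union bound over the test error, the QFT error and exhausting all $r$ trials gives failure probability at most $\delta$ with the stated $+4$. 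The running time comes from $r$ trials, each consisting of Gaussian elimination and the computation of $h$ on $nt$ qubits, where the compression step is the bottleneck at $\mco(n\,t^4)$ per run. This yields the $\mco(n(n+\log(1/\delta))^4)$ bound.
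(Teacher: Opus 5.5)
Your lower bound rests on support counting, and that cannot work. The relaxation $p_{\rm av}\le|\{h:p^{(t)}_h>0\}|/|\Upsilon|$ is valid. It is Cauchy--Schwarz applied to the exact covariant value $\frac{1}{|\Upsilon|}(\sum_h\sqrt{p^{(t)}_h})^2$. But once $t$ is slightly larger than $n$, the support of $p^{(t)}$ is all of $\widehat\Upsilon$, and the bound becomes trivially $1$.
\begin{itemize}
\item For $n=1$ the labels are $|x|\bmod 4$, and all four residues occur once $t\ge 3$.
\item For general $n$, every $h$ is realised when $t\ge 3n+\binom n2$. Use one shared column for each pair with $h_{jk}=1$, and at most three private columns per row to fix its weight mod $4$.
\item I believe every $h$ is in fact realised already for $t\ge n+3$, by a Brown-invariant (Witt-type) argument I have sketched but not written out. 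Every nondegenerate $\mbz_4$-valued form of dimension at most $n$ should be an orthogonal summand of $(\mbf_2^t,q_t)$ once $t-n\ge 3$. The radical is sent to $0$, apart from at most one vector with value $2$, which goes to a weight-$2 \bmod 4$ vector orthogonal to the image.
\end{itemize}
So your hoped-for estimate $1-c\,2^{-(t-n)}$ fails for all large $t$, and your ``$\mbz_4$-rank'' statement fails too. For each fixed $n$ the argument gives a lower bound on $t$ that does not grow as $\delta\to 0$, so the $\lceil\log_2(1/\delta)\rceil$ term cannot come from support counting. If the $n+3$ bound holds, the argument cannot even rule out $t=n+3$, and so fails for every $\delta<2^{-6}$.

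The missing idea is that, once the support is full, the obstruction lies entirely in the non-uniformity of the weights $p^{(t)}_h$. You must therefore keep the exact value $A(p^{(t)},u)^2$ from Lemma~\ref{lem:orbitdisc} and show the Hellinger affinity to uniform stays bounded away from $1$. The paper does this with a Fourier witness. For the $2(2^n-1)$ frequencies $r=\pm\ell_v$ (parities written as elements of $\Upsilon$), one has $\widehat{p^{(t)}}(r)=m(r)^t$ with $|m(r)|=2^{-1/2}$. Phase-aligning these gives a function $f$ with $\mathbb{E}_uf=0$, $\mathbb{E}_uf^2=1$, $\mathbb{E}_{p^{(t)}}f=\sqrt z$ and $\mathbb{E}_{p^{(t)}}f^2\le 1+z$, where $z=2(2^n-1)2^{-t}$. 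Cauchy--Schwarz then forces $1-A^2\ge\frac18\min\{1,z\}$, which gives $\delta\ge 2^{n-t-3}$.

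Your upper-bound algorithm is the paper's, but its quantitative core is missing for the same reason. You derive near-uniformity of $p^{\rm aff}_h$ from ``the dual of the counting'', yet knowing which $h$ occur says nothing about how uniform their weights are. What works is a second-moment bound.
\begin{itemize}
\item H\"older gives $(\sum_h\sqrt{r_h})^2\ge 1/\sum_h r_h^2$ for any distribution $r$.
\item $|\Upsilon|\sum_h(p^{(t)}_h)^2=\sum_{(\alpha,\beta)}|\langle\phi_{0,0}|\phi_{\alpha,\beta}\rangle|^{2t}\le 1+C_{n,t}$, with $C_{n,t}\le 4/(2^{s-1}-1)$. This comes from counting stabilizer states at each overlap $2^{-k}$ with a fixed one.
\item Conditioning on the affine-rank test costs only a factor $a_{n,t}^{-2}$.
\end{itemize}
Separately, your running-time claim assumes $\ket{\mathcal F_h}\ket{h}\mapsto\ket{0^{nt}}\ket{h}$ can be implemented in depth $\mco(nt^4)$. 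This is the paper's main technical step (Lemma~\ref{lem:unitary}). It needs Wood's Witt-type extension theorem to show the number of admissible next rows does not depend on the prefix. It also needs exact evaluation of $\mbz_4$ exponential sums as stabilizer inner products. Finally, the approximation error $\zeta$ of this step must be charged to the failure budget $\delta$.
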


\noindent
As one can learn an $n$-qubit Clifford unitary by learning its Choi state, which is itself manifestly a $2n$-qubit stabilizer state,  we obtain the immediate corollary
\begin{restatable}{crl}{crlcliff}~\label{crl:cliff}
For every $0<\delta<1/8$, one can learn an $n$-qubit Clifford via $2n+\left\lceil\log_2(1/\delta)\right\rceil+4$ (forward) queries. 
\end{restatable}
We show in Appendix~\ref{sec:cliff} that the $n$-dependence of Corollary~\ref{crl:cliff} is optimal.

\section{Upper bound}\label{sec:upperbound}
In this section we describe our learning algorithm (see Figure~\ref{fig:1}), the analysis of which will provide a constructive proof of the upper bound of Theorem~\ref{thm:main}. To improve readability, we will at various points cite lemmas whose proofs are delegated to the appendices.

\subsection{Learning algorithm overview}
We now walk through our  stabilizer learning algorithm; as depicted in Figure~\ref{fig:1}, there are four  main steps: (i) reducing the   problem to that of learning a stabilizer state of full support across the computational basis, (ii) compressing the resulting state to one living in the group algebra $\mbc[\Upsilon]$ of a particular abelian group $\Upsilon$,  (iii) performing a Fourier transform over $\Upsilon$ to extract some relevant information, and (iv) converting the learnt information into an explicit stabilizer tableau representing the unknown state. 

\begin{table}[t]
\label{tab:prior_learners}
\centering
\footnotesize
\setlength{\tabcolsep}{3pt}
\renewcommand{\arraystretch}{1.18}
\begin{tabular*}{\columnwidth}{@{\extracolsep{\fill}}lccc@{}}
\toprule
Approach & Copies & Coherence & Time \\
\midrule
Aaronson--Gottesman~\cite{gottesman2008identifying}
  & $\mco(n^2)$ & $1$ & $\operatorname{poly}(n)$ \\
Aaronson--Gottesman~\cite{gottesman2008identifying}
  & $\mco(n)$ & $\mco(n)$ & $\operatorname{poly}(n)$ \\
Bell sampling~\cite{montanaro2017learning} 
  & $3n+\mco(1)$ & $2$ & $\mco(n^3)$ \\
\quad + joint sign readout
  & $2n+\mco(1)$ & $2$ & $\mco(n^3)$ \\
Separable measurements~\cite{arunachalam2022optimal}
  & $\mco(n^2)$ & $1$ & $2^{\mco(n^2)}$ \\
StateHSP~\cite{hinsche2026abelian}
  & $\mco(n)$ & $2$ & $\operatorname{poly}(n)$ \\
\midrule
This work
  & $n+\mco(1)$ & $n+\mco(1)$ & $\widetilde \mco(n^5)$ \\
\bottomrule
\end{tabular*}
\caption{The sample complexities of previous approaches to learning  learning  an unknown pure $n$-qubit stabilizer state   at fixed failure probability $0<\delta<1/8$, along with how many copies need to be acted on coherently, and the algorithm time.}
\end{table}

\subsubsection{Reduction to a full-support state}
In step one, we reduce the problem to that of learning stabilizer states which are in a certain sense ``full rank''. Our starting point is  that each stabilizer state may be written in the   form~\cite{dehaene2003the,gross2008the}
\begin{equation}
    \ket{S}=2^{-k/2}\sum_{u\in L}i^{f(u)}\ket{a+u}\,,
\end{equation}
where $L$ is a dimension-$k$ subspace of  $\mathbb{F}_2^n$,  $a\in\mbf_2^n$, and  $f:L\to \mbz_4$ is a quadratic function. As we shall see, this form has a particularly nice interpretation when $k=n$; in this case, we say that $\ket S$ is of \textit{full rank}. \\

Now, for $t\geq n+1$, let $\Pi_{n,t}$ be the orthogonal projector onto the subspace spanned by those computational basis tuples $\x=(\x^{(1)},\dots,\x^{(t)})\in (\mathbb{F}_2^n)^t$ whose affine span is all of $\mathbb{F}_2^n$, i.e. $\text{affrank}(\x)=n$ where
\begin{equation}
\text{affrank}(\x):=\text{rank}[\x^{(2)}-\x^{(1)},\dots,\x^{(t)}-\x^{(1)}]\,.\label{eq:affrank}
\end{equation}
This projector can be implemented reversibly and efficiently by coherently computing (via Gaussian elimination, say) the rank  into an ancilla, measuring that ancilla, and uncomputing the rank. To be concrete, denote by $b(\x)=\mathds{1}_{\text{affrank}(\x)=n}$ an indicator bit for whether the tuple $\x$ is of full rank in the sense of Eq.~\eqref{eq:affrank}, and write $\ket{S}\tt = \sum_{\x\in (\mathbb{F}_2^n)^t}\a_{\x} \ket{\x} $ for some  $\a_{\x}\in\mbc$. The progression to this point then looks like
\begin{align*}
    \sum_{\x\in (\mathbb{F}_2^n)^t}\a_{\x} \ket{\x}\ket 0&\to \sum_{\x\in (\mathbb{F}_2^n)^t}\a_{\x} \ket{\x}\ket {b(\x)}\\
    &\hspace{-3.25mm}\underset{{\rm measure}}{\to} \sum_{\substack{\x\in (\mathbb{F}_2^n)^t\\ b(\x)=b}}\a_{\x} \ket{\x}\ket {b}\\
    &\to \sum_{\substack{\x\in (\mathbb{F}_2^n)^t\\b(\x)=b}}\a_{\x} \ket{\x}\ket {0} 
\end{align*}

Crucially, if $k<n$ then $\Pi_{n,t}\ket{S}^{\otimes t}=0$ (i.e. $b(\x)=0$ for all $\x$ with  $\a_{\x}\neq 0$), so that measuring the ancilla gives rejection with certainty, and neither the test nor the uncomputation step disturb the state\footnote{It might seem   simpler to instead employ the POVM which tests if $\ket{S}^{\ot t}$ lies within the span of the $t$-fold tensor powers of the stabilizer states of rank less than $n$;  it is however unclear if that POVM can be implemented efficiently.}. Conversely, for a full-support state we find:

\begin{restatable}{lem}{fullsupport}\label{lem:fullsupp}
The $t$-fold tensor power of a full-support stabilizer state is accepted by $\Pi_{n,t}$ with  probability  
\begin{equation}
a_{n,t}=\prod_{j=0}^{n-1}(1-2^{j-(t-1)})  \,,
\end{equation}
  which satisfies $1-a_{n,t}<2^{n-t+1}$.
\end{restatable}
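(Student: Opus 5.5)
The plan is to compute the acceptance probability directly from the computational-basis statistics. For a full-support stabilizer state $\ket S=2^{-n/2}\sum_{u\in\mbf_2^n} i^{f(u)}\ket{a+u}$, every amplitude has modulus $2^{-n/2}$. Hence $\ket{S}^{\ot t}$ has amplitudes $\a_{\x}$ with $|\a_{\x}|^2=2^{-nt}$ for every tuple $\x\in(\mbf_2^n)^t$. Since $\Pi_{n,t}$ is diagonal in the computational basis, the acceptance probability is
\[
\bra{S}^{\ot t}\Pi_{n,t}\ket{S}^{\ot t}=\sum_{\x:\,b(\x)=1}|\a_{\x}|^2=\Pr_{\x}[\text{affrank}(\x)=n],
\]
where $\x^{(1)},\dots,\x^{(t)}$ are i.i.d. uniform on $\mbf_2^n$. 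The phases $i^{f(u)}$ and the shift $a$ play no role.

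Next, I reduce the affine rank to an ordinary rank. Condition on $\x^{(1)}$. The differences $\x^{(j)}-\x^{(1)}$ for $j=2,\dots,t$ are then $t-1$ independent uniform vectors in $\mbf_2^n$, since translation is a bijection. So the acceptance probability equals the probability that a uniformly random $n\times(t-1)$ matrix over $\mbf_2$ has rank $n$, i.e. full row rank.

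This probability follows from the standard count. Equivalently, a uniform $(t-1)\times n$ matrix has $n$ linearly independent columns. Choosing columns one at a time, the $(j+1)$-th column (for $j=0,\dots,n-1$) must avoid the $2^j$-element span of the previous columns inside $\mbf_2^{t-1}$. This happens with probability $1-2^{j-(t-1)}$, which gives $a_{n,t}=\prod_{j=0}^{n-1}(1-2^{j-(t-1)})$.

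For the bound, I use the union bound $1-\prod_j(1-p_j)\le\sum_j p_j$. This gives
\[
1-a_{n,t}\le\sum_{j=0}^{n-1}2^{j-t+1}=2^{-t+1}(2^n-1)<2^{n-t+1}.
\]
The inequality $1-\prod(1-p_j)\le\sum p_j$ follows by a one-line induction, and holds since all $p_j\in[0,1]$ when $t\ge n+1$.

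There is no real obstacle. The only point needing care is justifying that the amplitude moduli are uniform, so that acceptance reduces to a purely combinatorial random-matrix probability; this is immediate from the full-rank form with $k=n$.
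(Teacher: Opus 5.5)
Your proof is correct and follows the paper's argument: reduce acceptance to the event that the $t-1$ i.i.d.\ uniform difference vectors have rank $n$, then count row by row. Your column count on the transpose is the same computation. The one minor difference is the tail bound, where you use $1-\prod_j(1-p_j)\le\sum_j p_j$ (as the paper itself does in Appendix~\ref{sec:bell}) rather than the paper's union bound over the $2^n-1$ hyperplanes; both give $(2^n-1)2^{1-t}<2^{n-t+1}$.
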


In addition, a random Clifford\footnote{It would in fact prove sufficient to sample randomly from the right coset $\Upsilon\backslash\mathsf{Cl}_n$; this is a minor distinction we do not belabour.} takes us to a full-support state with constant probability:
\begin{restatable}{lem}{fullprob}\label{lem:fullprob}
Let $\ket{S}$ be an arbitrary stabilizer state. If $C$ is a uniformly random Clifford, then $C\ket{S}$ has full computational-basis support with probability $p_n=\prod_{j=1}^n (1+2^{-j})^{-1}>\frac{1}{3}$.
\end{restatable}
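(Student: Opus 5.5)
Since $C$ is uniformly random over the Clifford group and the Clifford group acts transitively on $\stab_n$, the state $C\ket{S}$ is uniformly distributed over $\stab_n$. So $p_n$ is just the fraction of stabilizer states with full computational-basis support. Write each state in the form $2^{-k/2}\sum_{u\in L}i^{f(u)}\ket{a+u}$. Its support is the affine subspace $a+L$, so full support means $k=n$. My plan is to count the stabilizer states with $k=n$ and divide by the known total
\begin{equation}
|\stab_n|=2^n\prod_{j=1}^n(2^j+1)
\end{equation}
from Aaronson--Gottesman.

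For the count of full-rank states, fix the support $a+L=\mathbb{F}_2^n$. The state is then determined by the phase function $x\mapsto i^{f(x)}$, up to global phase. For stabilizer states this takes the standard form $i^{\ell\cdot x}(-1)^{q(x)}$, which is equivalent to $i^{x^\mst \Gamma x + 2 c\cdot x}$ structure: a $\mathbb{Z}_4$-linear part $\ell\in\mbz_4^n$ (the diagonal, $S$-gate contributions) and an off-diagonal quadratic form with coefficients in $\mathbb{F}_2^{n(n-1)/2}$ (the $CZ$ contributions).
\begin{itemize}
\item These states are exactly $\prod S_j^{\ell_j}\prod CZ_{jk}^{\beta_{jk}}\ket{+}^{\ot n}$.
\item Distinct parameters give distinct phase functions once normalized, so there are exactly $4^n\,2^{n(n-1)/2}=2^{n(n+3)/2}$ full-support stabilizer states.
\end{itemize}
This is the group $\Upsilon$ mentioned in the figure caption.

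The step needing most care is checking that this parametrization is both surjective and injective. For surjectivity I will use the known fact that every stabilizer state with support $\mathbb{F}_2^n$ equals $D\ket{+}^{\ot n}$ for a diagonal Clifford $D$, and that diagonal Cliffords modulo phase are generated by $S$ and $CZ$. For injectivity, the function $x\mapsto i^{f(x)-f(0)}$ determines $\ell_j$ from its values at $e_j$ and $\beta_{jk}$ from its values at $e_j+e_k$.

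Dividing gives
\begin{equation}
p_n=\frac{2^{n(n+3)/2}}{2^n\prod_j(2^j+1)}=\frac{2^{n(n+1)/2}}{\prod_j(2^j+1)}=\prod_{j=1}^n\frac{2^j}{2^j+1}=\prod_{j=1}^n(1+2^{-j})^{-1}.
\end{equation}
For the bound, use $\log(1+x)\le x$ to get $\prod_{j\geq1}(1+2^{-j})\le e^{\sum_j 2^{-j}}=e<3$. The product is increasing in $n$, so its value for every finite $n$ is strictly below $e$, and hence $p_n>1/e>1/3$.
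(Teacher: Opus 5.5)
Your proposal is correct and follows essentially the same route as the paper: count the $2^{(n^2+3n)/2}$ full-support states via the $\Upsilon$-parametrization, divide by $|\stab_n| = 2^n\prod_{j=1}^n(2^j+1)$, and bound $1/p_n \le e^{\sum_j 2^{-j}} < e < 3$ using $1+x\le e^x$. You also spell out two steps the paper leaves implicit: that transitivity makes $C\ket{S}$ uniform on $\stab_n$, and that the parametrization is injective.
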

Taken together, given access to a number $t$ of copies slightly greater than $n$  of $\ket{S}$, Lemmas~\ref{lem:fullsupp} and~\ref{lem:fullprob} allow us to (with high probability) obtain a  full-rank state related to our original state by a known Clifford $C$. Indeed, we merely sample a random Clifford, apply its $t$-fold tensor power, and check if we obtained a full rank state; if not, we apply the $t$-fold tensor power of its inverse and try again with a new random Clifford. The  probability of failure, which occurs when either we  ``incorrectly'' reject a full rank state, or none of the $r$ sampled Cliffords yields a full-support state, is discussed in Section~\ref{sec:accounting}.  \\

Now, having arrived at  a full-rank state, we   have $a+L=L=\mathbb{F}_2^n$, so that we can write such a state as
\begin{equation}
\ket{\phi_{\alpha,\beta}} = 2^{-n/2}\sum_{x\in{\mathbb{F}_2^n}}i^{\alpha\cdot x + 2\sum_{1\leq j<k\leq n}\beta_{jk}x_j x_k}\ket{x}\,,
\end{equation}
labeled by $\alpha\in\mbz_4^n$ and $\beta\in\mathbb{F}_2^{\binom{n}{2}}$. Identifying the state is then clearly equivalent to recovering $\alpha$ and $\beta$; note that this would not be true without the full-rank assumption (while one could still introduce parameters $\a$ and $\b$ to describe the quadratic form in the rank-deficient case, learning them would for example not reveal the support of the state). The full-rank states, though, are thus exactly specified by elements of the abelian group\footnote{Incidentally, a group structure is indeed present (that is, this is not merely an identification on the level of sets), indeed, note $D_{\alpha,\beta}D_{\alpha',\beta'}=D_{\alpha+\alpha',\,\beta+\beta'}$;  this will be important later.} $\Upsilon := \mbz_4^n \times \mathbb{F}_2^{\binom{n}{2}}$ of size $|\Upsilon| = 2^{(n^2+3n)/2}$. Explicitly, with 
\begin{equation}\label{eq:dab}
D_{\alpha,\beta}:=\prod_{j=1}^n S_j^{\alpha_j}
\prod_{1\leq j<k\leq n}
\operatorname{CZ}_{jk}^{\beta_{jk}}
\end{equation}
for $(\a,\b)\in\Upsilon$ we have $\ket{\phi_{\alpha,\beta}}=D_{\alpha,\beta}\ket{+}^{\otimes n}$. This identification completes the first step of our algorithm.

\subsubsection{Isotypic compression}
After the first stage  of the algorithm, our state is of the form $\ket{\psi_{{\rm stage}\,2}}=a_{n,t}^{-1/2}\Pi_{n,t}\ket{\phi_{\alpha,\beta}}^{\otimes t}$ for some unknown $(\a,\b)\in\Upsilon$, which we would like to determine. 
Write a computational basis tuple as an $n\times t$ binary matrix $\x$, and let $\x_j\in \mathbb{F}_2^t$ be its $j$th row. We   define~\cite{wood1993witts,gross2021schur}
\begin{equation}\label{eq:h}
    h(\x)=((q_t(\x_j))^n_{j=1},(\x_j\cdot \x_k)_{1\leq j<k\leq n})\,,
\end{equation}
where $ q_t(x)=|x|\mod 4$ is the Hamming weight of the bitstring $x\in\mbf_2^t$ (mod 4). The relevance of taking the residue modulo four is that the phase  of the basis vector $\x$ in $\ket{\phi_{\alpha, \beta}}^{\otimes t}$ may be written as  the $\Upsilon$-character
\begin{equation}
\chi_{h(\x)}((\alpha,\beta))=i^{\sum_j \alpha_j q_t(\x_j)+2\sum_{j<k}\beta_{jk}(\x_j\cdot \x_k)}\,.
\end{equation}
Now (recalling Eq.~\eqref{eq:affrank}) let us define the set
\begin{equation}
    \mcf_h = \{\x:h(\x)=h\text{ and }\text{affrank}(\x)=n\}\,;
\end{equation}
note that the span of the $\ket{\x}$ corresponding to the constituents $\x$ of a given $\mcf_h$ is exactly an isotypic component of the $\Upsilon$-representation furnished by the full affine rank strings. Now, the phase of each basis tuple depends on $\x$ only through $h(\x)$, so if two tuples $\x, \x'$ are in the same isotypic (i.e. $h(\x)=h(\x')$), they have identical amplitude. Therefore, we can rewrite our state by grouping computational basis states by isotypic:
\begin{align*}
\hspace{-2mm}\frac{\Pi_{n,t}\ket{\phi_{\alpha,\beta}}^{\otimes t}}{\sqrt{a_{n,t}}}
&=\frac{1}{\sqrt{a_{n,t}2^{nt}}}
\sum_{\substack{\text{affrank}(\x)=n}}
\hspace{-2mm}\chi_{h(\x)}((\alpha,\beta))\ket \x\\
&=\sum_h\chi_{h}((\alpha,\beta))\sqrt{\frac{|\mathcal F_h|}{a_{n,t}2^{nt}}}\,\ket{\mathcal F_h}\\
&=\sum_h\sqrt{p_h^{\mathrm{aff}}}\,\chi_{h}((\alpha,\beta))\ket{\mathcal F_h},
\end{align*}
with $p_h^{\mathrm{aff}}=|\mathcal F_h|/(a_{n,t}2^{nt})$ and $\ket{\mathcal F_h}=\frac{1}{\sqrt{|\mathcal F_h|}}
\sum_{\x\in\mathcal F_h}\ket \x$. 
At this point there are two relevant ``degrees of freedom'' in our state. We have a sum over various values of $h$, and, for a given $h$, a sum over the corresponding isotypic. This latter freedom  concerns us not, and indeed we aim now to efficiently ``compress'' each such isotypic into a single  state. This desire is formalised and achieved by the following lemma:

\begin{restatable}{lem}{unitary}\label{lem:unitary}
For any $n$, $t\geq n+1$ and $\zeta>0$, we have a circuit implementing $\sum_{h} \ketbra{h} \ot U_h$ of depth
$\mco(nt(t+\log(nt/\zeta))^3)$ such that $\left\|U_h\ket{0^{nt}}-\ket{\mathcal F_h}\right\|_2\leq\zeta\,$ for all $h$ corresponding to non-empty fibers. 
\end{restatable}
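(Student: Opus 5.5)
The plan is to prepare $\ket{\mathcal F_h}$ coherently by sequential amplitude loading, in the style of Grover--Rudolph. Conditioned on $\ket h$, we fill in the rows $\x_1,\dots,\x_n\in\mbf_2^t$ of the $n\times t$ matrix $\x$ one bit at a time, $nt$ bits in total. Before writing each new bit we apply a single-qubit $Y$-rotation whose angle is $\arccos\sqrt{N_0/(N_0+N_1)}$. Here $N_b$ counts the completions of the current partial matrix, with the next bit set to $b$, that lie in $\mathcal F_h$. After the last bit the state is exactly the uniform superposition over $\mathcal F_h$. The whole construction then reduces to two ingredients: computing the $N_b$ reversibly and efficiently, and controlling the precision of the angles.

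\textbf{Reformulation.} First reformulate the fibre. Since $\text{affrank}(\x)=n$ exactly when the rows $\x_1,\dots,\x_n$ together with the all-ones vector $\mathbf 1\in\mbf_2^t$ are linearly independent, $\mathcal F_h$ is the set of tuples of rows that are linearly independent modulo $\mathbf 1$ and realise prescribed data. That data is the values $q_t(\x_j)\in\mbz_4$ and the inner products $\x_j\cdot\x_k$. This is the language of the $\mbz_4$-valued quadratic form $q_t$ on $\mbf_2^t$, whose polarisation is the dot product, as in Wood's treatment of Witt's theorem~\cite{wood1993witts,gross2021schur}. By the Witt-type extension theorem, the set of completions of a fixed partial configuration $(\x_1,\dots,\x_{j-1})$ to an isometric embedding depends only on a small invariant: the isometry class of the span of the fixed rows and $\mathbf 1$, together with its radical. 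Its size is therefore a product of closed-form factors of the shape $2^{a}\pm 2^{b}$.

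\textbf{Partially filled rows.} A partially written row $\x_j$ (its first $m$ bits) needs one more count. We need the number of vectors in $\mbf_2^{t-m}$ with prescribed weight mod $4$, prescribed inner products with the restrictions of the earlier rows, and the right independence. I would compute this by splitting the ambient space as $\mbf_2^m\oplus\mbf_2^{t-m}$ and summing over the finitely many isometry types for the tail. These are counted by classical Gauss-sum or character formulas for $q_{t-m}$, namely exponential sums $\sum_y i^{q(y)}(-1)^{c\cdot y}$ with explicit values.

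\textbf{Circuit and errors.} Each count is an integer of $\mco(nt)$ bits, but only the ratio $N_0/(N_0+N_1)$ matters. From the product formulas this ratio can be evaluated as a ratio of expressions involving only $\mco(t)$-bit exponents and a bounded number of factors, computed to $\mco(t+\log(nt/\zeta))$ bits of precision. Reversible arithmetic (the ratio, a square root, an $\arccos$ via Newton or CORDIC) on such numbers costs $\mco((t+\log(nt/\zeta))^3)$ depth per step. Across $nt$ steps this gives the claimed $\mco(nt(t+\log(nt/\zeta))^3)$. Each controlled rotation is followed by uncomputation of the angle register. If every angle is accurate to $\zeta/(nt)$, the errors add by the triangle inequality to at most $\zeta$. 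Empty fibres never arise for the relevant $h$, and the counts are nonzero along every branch with nonzero amplitude.

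\textbf{Main obstacle.} I expect the hard step to be the explicit, uniformly efficient completion-count formula for a partially filled row, including the independence-from-$\mathbf 1$ constraint. I would first try to handle the independence constraint by inclusion--exclusion over the radical. This should be possible because the dependent configurations form a lower-dimensional orbit with its own Witt-type count.
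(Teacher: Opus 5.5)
\textbf{There is a gap, in the partial-row counting step.} Your skeleton is the same as the paper's: bit-by-bit Grover--Rudolph loading conditioned on $\ket h$, angles computed to $\mco(\log(nt/\zeta))$ bits, and errors summed over the $nt$ rotations. Reducing full-matrix completion counts to counts of valid next rows via the Witt/Wood extension theorem is also what the paper does, using Lemma~\ref{lem:fiber}: every valid prefix has the same number $d_j$ of next rows, and every valid row extends.

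The step that actually carries the lemma is an efficient computation of $N_{\s}$, the number of valid rows with a given bit prefix $\s$. You leave this open as your ``main obstacle'', and the tools you name do not close it. Explicit values of $\sum_y i^{q(y)}(-1)^{c\cdot y}$ handle only a single linear twist. You have $j-1$ inner-product constraints $y\cdot c_k=e_k$, and Fourier-expanding them produces a sum over $2^{j-1}$ dual vectors $\lambda$. Since $\sum_y i^{|y|}(-1)^{c\cdot y}=(1+i)^{T}(-i)^{|c|}$, each term depends on $|\sum_k\lambda_k c_k| \bmod 4$. That quantity is itself a $\mbz_4$-valued quadratic form in $\lambda$ with arbitrary cross terms. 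So you are left with a general quadratic exponential sum, and you give no algorithm for it. The ``sum over isometry types'' route would likewise need an explicit, efficiently computable classification of the tail configuration, which you do not supply. As a result, your per-step depth bound covers only the arithmetic on the ratio, not the computation of the counts.

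\textbf{The missing idea, as in the paper.} The bit-prefix constraints and the conditions $\x\cdot\x_k=h_{kj}$ are all affine. Gaussian elimination therefore parametrises the solution set as $\x_0+\sum_r y_r\boldsymbol v_r$ with $y\in\mbf_2^m$. By polarisation, $q_t$ becomes $P(y)=c+\sum_r a_ry_r+2\sum_{r<r'}b_{rr'}y_ry_{r'} \bmod 4$, and the count is $\#\{y:P(y)=h_j\}=\frac14\sum_{\ell=0}^3 i^{-\ell h_j}\sum_y i^{\ell P(y)}$. The $\ell=2$ sum depends only on the linear part and is trivial. The $\ell=1$ sum equals $2^m\bra{+}^{\ot m}U_P\ket{+}^{\ot m}$ with $U_P$ a diagonal Clifford, so it is a stabilizer overlap computable in $\mco(t^3)$; the $\ell=3$ sum is its conjugate.

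The independence constraint needs no inclusion--exclusion over orbits. Membership in $\text{span}\{\mathbf 1_t,\x_1,\dots,\x_{j-1}\}$ is just more linear equations, so the in-span count $N''_{\s}$ is computed the same way and $N_{\s}=N'_{\s}-N''_{\s}$. This costs $\mco(t^3)$ per bit. Together with $b=\lceil\log(nt/\zeta)\rceil$-bit angle arithmetic, it gives the claimed $\mco(nt(t+b)^3)$ depth.
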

Having computed $h(\x)$ into a clean register, then, Lemma~\ref{lem:unitary} allows us to coherently apply $U_h^\dagger$ to the $\x$ register,  yielding in superposition the transformation $\ket{\mathcal{F}_h}\ket{h}\mapsto \ket{0^{nt}}\ket{h}$. 
This concludes the isotypic compression step.

\subsubsection{Fourier transform}
At the beginning of  the final part of our algorithm, our state has assumed the form
\begin{equation}\label{eq:qftstart}
    \ket{\psi_{{\rm stage}\,3}}=\sum_{h\in \widehat \Upsilon}\sqrt{p^{\text{aff}}_h}\chi_{h}((\alpha,\beta))\ket{h}\,.
\end{equation}
Now, with
\begin{equation}
    \mathsf{F}_{\Upsilon}=\frac{1}{\sqrt{|\Upsilon|}}\sum_{(a,b)\in\Upsilon,h\in\widehat{\Upsilon}}\chi_h ((a,b))\ketbra{h}{a,b}
\end{equation}
the Fourier transform over $\Upsilon$, applying $\mathsf{F}_{\Upsilon}^\dagger$ to Eq.~\eqref{eq:qftstart} yields a label $(a,b)\in{\Upsilon}$ with probability 
\begin{align}
    p_{a,b}= \frac{1}{|\Upsilon|}\Big\lvert\sum_h \sqrt{p_h^{\text{aff}}} \chi_{h}((\alpha,\beta))\overline{\chi_h((a,b))}\Big\rvert^2.
\end{align}
If $(a,b)=(\alpha,\beta)$, the character phases cancel, and so measuring gives the (correct) label $(\alpha,\beta)$ with probability
\begin{equation}\label{eq:psum}
    p_{\rm succ}=\frac{1}{|\Upsilon|}\Big(\sum_h\sqrt{p^{\text{aff}}_h}\Big)^2.
\end{equation}
Now, if $p_h^{\text{aff}}$ were uniform, the state in Eq. \eqref{eq:qftstart} would be exactly $\mathsf{F}_{\Upsilon}\ket{\alpha,\beta}$, so that decoding would succeed with probability $1$;  in reality  there is some nonuniformity present  in $p_h^{\text{aff}}$ for finite $t$, leading to some nonzero failure probability. It thus remains to control the sum of Eq.~\eqref{eq:psum}; we find (Appendix~\ref{sec:minutiae}) that it is very well behaved for $t$ marginally greater than $n$, leading to:
\begin{restatable}{lem}{fourier}\label{lem:fourier}
Conditioned on the affine rank test having accepted some full-support state $\ket{\phi_{\alpha, \beta}}^{\otimes t}$, for $t=n+s$ with $s>1$, Fourier decoding returns $(\alpha, \beta)$ with probability 
\begin{equation}\label{eq:succbound}
    p_{\rm succ}\geq \frac{(1-2^{1-s})^2}{1+4/(2^{s-1}-1)}.
\end{equation}
\end{restatable}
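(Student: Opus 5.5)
The plan is to bound $p_{\rm succ}$ from Eq.~\eqref{eq:psum} by a collision probability and then compute that collision probability exactly with characters of $\Upsilon$. Write $p_h:=p_h^{\rm aff}$, which is a probability distribution on $\widehat\Upsilon$. By H\"older, $\sum_h p_h=\sum_h p_h^{1/3}p_h^{2/3}\le (\sum_h\sqrt{p_h})^{2/3}(\sum_h p_h^2)^{1/3}$. Since $\sum_h p_h=1$, this gives
\begin{equation*}
p_{\rm succ}\ \ge\ \frac{1}{|\Upsilon|\sum_h p_h^2}\,.
\end{equation*}
Let $\x,\boldsymbol{y}$ be independent uniform tuples in $(\mbf_2^n)^t$. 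Then $\sum_h p_h^2=a_{n,t}^{-2}\Pr[h(\x)=h(\boldsymbol y),\ \text{both of full affine rank}]$. I drop the rank constraint to get an upper bound, and use Lemma~\ref{lem:fullsupp} in the form $a_{n,t}\ge 1-2^{n-t+1}=1-2^{1-s}$. It then suffices to show
\begin{equation*}
|\Upsilon|\,\Pr[h(\x)=h(\boldsymbol y)]\ \le\ 1+\frac{4}{2^{s-1}-1}\,,
\end{equation*}
and this produces exactly the $(1-2^{1-s})^2$ numerator in Eq.~\eqref{eq:succbound}.

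Next I expand the indicator of $h(\x)=h(\boldsymbol y)$ by orthogonality of characters, $\mathds 1[h=h']=|\Upsilon|^{-1}\sum_{g\in\Upsilon}\chi_h(g)\overline{\chi_{h'}(g)}$. This gives $|\Upsilon|\Pr[h(\x)=h(\boldsymbol y)]=\sum_{g\in\Upsilon}\bigl|\mbe_{\x}\chi_{h(\x)}(g)\bigr|^2$. The phase $\chi_{h(\x)}(g)$ factorises over the $t$ columns $\x^{(i)}$ of $\x$, because it is precisely the amplitude phase of $\ket{\phi_g}^{\ot t}$. Hence $\mbe_\x\chi_{h(\x)}(g)=\langle +^n|\phi_g\rangle^{t}$, and the quantity to bound becomes
\begin{equation*}
\sum_{g\in\Upsilon}|\langle +^n|\phi_g\rangle|^{2t}=1+\sum_{g\neq 0}|\langle +^n|\phi_g\rangle|^{2t}.
\end{equation*}
The $g=0$ term accounts for the leading $1$.

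For the remaining terms I use the standard fact that two distinct stabilizer states have squared overlap either $0$ or $2^{-k}$ with $1\le k\le n$. Let $N_k$ be the number of $g\in\Upsilon$ with $|\langle+^n|\phi_g\rangle|^2=2^{-k}$. The tail is then $\sum_{k\ge1}N_k2^{-kt}$. The main obstacle is a clean counting bound of the form $N_k\le c\,2^{kn-\binom{k}{2}}$ or similar; even the crude bound $N_k\le 4\cdot2^{k(n-1)}$ is enough.

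To get that bound, I would argue that a state with overlap $2^{-k}$ with $\ket{+^n}$ shares an $(n-k)$-dimensional stabilizer subgroup with it. This means choosing an $(n-k)$-dimensional subspace of the $X$-type group, giving $\left[\begin{smallmatrix}n\\k\end{smallmatrix}\right]_2\le 4\cdot 2^{k(n-k)}$ choices (the Gaussian binomial is at most $\prod_j(1-2^{-j})^{-1}2^{k(n-k)}<4\cdot2^{k(n-k)}$). The remaining freedom is a stabilizer state on the $k$-dimensional quotient that is unbiased relative to it, which contributes at most $2^{k(k+1)/2}\cdot 2^{k}$-type factors. I expect to need some care to get the exponent $\le k(n-1)$ within the full-rank family. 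I would first try a direct computation: $\langle+^n|\phi_{\a,\b}\rangle=2^{-n}\sum_x i^{\a\cdot x+2x^\top\b x}$ is a quadratic Gauss sum whose magnitude is governed by the rank of the associated binary symmetric form, and counting symmetric forms over $\mbf_2$ of given rank gives such bounds.

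With $N_k\le 4\cdot 2^{k(n-1)}$ and $t=n+s$, the tail is at most $4\sum_{k\ge1}2^{-k(s+1)}\le 4/(2^{s-1}-1)$, which yields Eq.~\eqref{eq:succbound}.
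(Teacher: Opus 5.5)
Up to the final count your argument is the paper's: H\"older to reduce to the collision probability, dropping the affine-rank constraint at the cost of $a_{n,t}^{-2}$, and Parseval over $\Upsilon$ to reach $\sum_{g\in\Upsilon}|\langle +^n|\phi_g\rangle|^{2t}$. The gap is in the counting step. The bound $N_k\le 4\cdot 2^{k(n-1)}$, which you say is enough and then use, is false, so the ``care'' you plan to spend reaching exponent $k(n-1)$ cannot succeed.

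Your own Gauss-sum idea shows why. $|\langle+^n|\phi_{\alpha,\beta}\rangle|^2$ equals $0$ or $2^{-\operatorname{rank}B}$, where $B$ is the symmetric binary matrix with diagonal $\alpha\bmod 2$ and off-diagonal entries $\beta$. Exactly $2^k$ of the $2^n$ choices of $\alpha$ lifting a rank-$k$ matrix $B$ give nonzero overlap. Hence $N_k=2^k\binom nk_2\nu_k$, where $\nu_k$ is the number of nondegenerate symmetric $k\times k$ matrices over $\mathbb F_2$. So $N_k$ is of order $2^{kn-k(k-3)/2}$, which exceeds $4\cdot 2^{k(n-1)}$ for $k=2,3,4$ once $n$ is moderately large. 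Concretely, for $n=3$ one gets $N_2=4\cdot 7\cdot 4=112>64$. Restricting to the full-rank family saves only a constant factor relative to all of $\text{Stab}_n$, so it cannot rescue the exponent.

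The repair is to aim lower. Your stabilizer-subgroup sketch actually delivers $N_k\le 4\cdot 2^{k(n-k)}\cdot 2^{(k^2+3k)/2}$. This is precisely the paper's count $L_n(k)=\binom nk_2 2^{(k^2+3k)/2}$ of \emph{all} stabilizer states at squared overlap $2^{-k}$; the paper simply bounds the sum over $\Upsilon$ by the sum over $\text{Stab}_n$ instead of working inside the full-rank family. With $t=n+s$ the exponent is $k(n-k)+\tfrac{k^2+3k}{2}-k(n+s)=\tfrac{k(3-k)}{2}-ks\le k(1-s)$. The tail is therefore at most $4\sum_{k\ge1}2^{-k(s-1)}=4/(2^{s-1}-1)$. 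This is why the lemma has $2^{s-1}$ rather than the $2^{s+1}$ your bound would give: that slack is exactly what absorbs the factor $2^{k(3-k)/2}$.
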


As the right hand side of Eq.~\eqref{eq:succbound} is $1-\mco(2^{-s})$, we see that taking moderately more than $n$ copies of the state suffices to learn it with high probability; this intuition is formalised in    Section~\ref{sec:accounting}. 

\subsubsection{Tableau reconstruction}

Finally, let us explicitly detail how one can efficiently compute the \textit{stabilizer tableau}~\cite{aaronson2004improved,dehaene2003the} of our unknown state $\ket\psi$ from the data  $(C, (\alpha,\beta))$ that we have obtained throughout the course of the algorithm. That is, we wish to  find  $n$  algebraically independent (signed) Pauli strings for which $\ket\psi$ is an eigenvector with eigenvalue 1. To begin, we note that, with $D_{\alpha,\beta}$ as in Eq.~\eqref{eq:dab},
we have $\ket{\phi_{\alpha,\beta}} = D_{\alpha,\beta}\ket{+}^{\otimes n}$; as
$\ket{+}^{\otimes n}$ is stabilized by $X_1,\dots,X_n$, the state $\ket{\phi_{\alpha,\beta}}$ is then stabilized by the (manifestly algebraically independent) Pauli strings $g_j = D_{\alpha,\beta} X_j D_{\alpha,\beta}^\dagger$  for $j=1,\dots,n$.   As conditioned on the success of the algorithm, we furthermore have $\ket\psi = C\ad \ket{\phi_{\alpha,\beta}}$, it follows that $\ket\psi$ is itself stabilized by the operators 
\begin{equation}
    C^\dagger g_j C = (D_{\alpha,\beta}\ad C)\ad X_j (D_{\alpha,\beta}\ad C) \,.
\end{equation}
As $D_{\alpha,\beta}\ad C$ is a (known) Clifford operator, one can evaluate its adjoint action on the full set of the $X_j$ in time $\mco(n^3)$~\cite{aaronson2004improved}, yielding a generating set for the stabilizer of $\ket\psi$, and thereby identifying it uniquely.

\subsection{Resource accounting} \label{sec:accounting}
Putting everything together, we can obtain the following  bounds on the sample and computational-complexity of our algorithm:
\begin{restatable}{lem}{bounds}\label{lem:bounds}
With worst-case failure probability $0<\delta\leq 1$, our   algorithm   learns an unknown pure $n$-qubit stabilizer state  using $t=n+4+\bigl\lceil\log_2(1/\delta)\bigr\rceil$ input copies and at most $r=\lceil 3\ln(40/\delta)\rceil$   trials when searching for a Clifford that gives a full-support coordinate system.  
\end{restatable}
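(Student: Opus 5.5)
The proof is a failure-probability budget: the algorithm can fail in only three ways, and we bound each by a fixed fraction of $\delta$, then take a union bound. Set $t=n+s$ with $s=4+\lceil\log_2(1/\delta)\rceil$, so $2^{-s}\le \delta/16$.

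The three failure modes are as follows.
\begin{enumerate}
\item[(a)] None of the $r$ sampled Cliffords maps $\ket{S}$ to a full-support state.
\item[(b)] Some trial produces a full-support state, but the affine-rank test wrongly rejects it, and the algorithm fails.
\item[(c)] The test accepts, but Fourier decoding returns the wrong label $(\alpha,\beta)$.
\end{enumerate}
Trials on rank-deficient states are harmless. By the discussion preceding Lemma~\ref{lem:fullsupp}, $\Pi_{n,t}\ket{S}^{\otimes t}=0$ in that case, so rejection is certain. The test and its uncomputation leave the state undisturbed, and applying $(C^\dagger)^{\otimes t}$ restores $\ket{S}^{\otimes t}$ exactly. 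Successive trials therefore act on fresh-looking copies and are independent.

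For (a), Lemma~\ref{lem:fullprob} says each trial succeeds independently with probability $p_n>1/3$. The probability that all $r$ trials miss is then
\[
(1-p_n)^r<(2/3)^r\le e^{-r/3}.
\]
With $r=\lceil 3\ln(40/\delta)\rceil$ this is at most $\delta/40$.

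For (b), the test is reached on a full-support state at most once, since the algorithm stops there. By Lemma~\ref{lem:fullsupp}, the rejection probability is
\[
1-a_{n,t}<2^{n-t+1}=2^{1-s}\le \delta/8.
\]

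For (c), apply Lemma~\ref{lem:fourier}, which requires $s>1$; this holds since $s\ge 4$. Write $\epsilon=2^{1-s}\le\delta/8\le 1/8$. The failure probability is at most
\[
1-\frac{(1-\epsilon)^2}{1+4\epsilon/(1-\epsilon)}.
\]
Using $(1-\epsilon)^2\ge 1-2\epsilon$ and $4\epsilon/(1-\epsilon)\le 32\epsilon/7$, this is bounded by $2\epsilon+32\epsilon/7=46\epsilon/7$. That equals $\tfrac{92}{7}2^{-s}\le \tfrac{92}{112}\delta$. Here we also account for the compression error $\zeta$ from Lemma~\ref{lem:unitary}. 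Choosing $\zeta=\delta/\mathrm{poly}$ adds only a small extra failure probability, at most $2\zeta$ in trace distance.

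Summing the three bounds gives
\[
\tfrac{1}{40}\delta+\tfrac{1}{8}\delta+\tfrac{92}{112}\delta+2\zeta.
\]
This falls just under $\delta$ once $\zeta\le\delta/200$, since $0.025+0.125+0.8214\approx0.971$.

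The main obstacle is this tight constant in (c). Crude expansions of Lemma~\ref{lem:fourier} cost too much, so the exact algebra has to be kept. If the bound still overshoots, the fallback is to use the sharper estimate $1-(1-\epsilon)^2/(1+4\epsilon/(1-\epsilon))\le 6\epsilon$, checked directly for $\epsilon\le 1/8$. That gives $\tfrac{3}{4}\delta$ for (c). Either way, the total stays within $\delta$ at $t=n+4+\lceil\log_2(1/\delta)\rceil$.
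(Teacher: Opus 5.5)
Your proof is correct and follows essentially the same route as the paper: a three-way union bound over Clifford-search failure, false rejection, and Fourier decoding error, which is equivalent to the paper's $1-xyz\le(1-x)+(1-y)+(1-z)$. Your bounds $\tfrac18\delta+\tfrac{92}{112}\delta$ for the last two terms sum to exactly the paper's $\tfrac{53}{56}\delta$. The only difference is cosmetic: the paper takes $\zeta=\delta/40$ and charges only $\zeta$, since the $h$-branches are orthogonal and trace distance is bounded by 2-norm distance, whereas your looser charge of $2\zeta$ with $\zeta\le\delta/200$ also closes the budget.
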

\noindent
Essentially, the $r$- and $t$- dependence come respectively from Lemma~\ref{lem:fullprob} and Lemma~\ref{lem:fourier}. On the computational complexity side, we find:
\begin{restatable}{lem}{compbound}\label{lem:compbound}
The computational complexity of our algorithm is $\mco(n(n+\log(1/\delta))^4)$.  
\end{restatable}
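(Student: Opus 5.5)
The plan is to add up the costs of the four stages of the algorithm, substituting $t=n+4+\lceil\log_2(1/\delta)\rceil=\mco(n+\log(1/\delta))$ and $r=\lceil 3\ln(40/\delta)\rceil=\mco(\log(1/\delta))$ from Lemma~\ref{lem:bounds}. The bottleneck will be the isotypic compression, and everything else must be checked to be no larger.

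\textbf{Stage one.} Each trial does the following:
\begin{itemize}
\item samples a uniformly random Clifford, which takes $\mco(n^2)$ gates via a canonical form with classical preprocessing $\mco(n^3)$;
\item applies it $t$ times, costing $\mco(tn^2)$;
\item computes $\text{affrank}(\x)$ reversibly by Gaussian elimination on an $n\times(t-1)$ binary matrix, costing $\mco(n^2t)$;
\item measures the ancilla and uncomputes.
\end{itemize}
Over $r$ trials this gives $\mco(r(tn^2+n^3))$. Since $r\le \mco(t)$ and $n\le t$, this is $\mco(nt^3)$.

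\textbf{Stage two.} First compute $h(\x)$ into a clean register. The Hamming weights mod 4 of $n$ rows cost $\mco(nt)$. The $\binom n2$ inner products mod 2 cost $\mco(n^2t)$. Then apply $\sum_h\ketbra{h}\ot U_h^\dagger$ from Lemma~\ref{lem:unitary} with $\zeta$ chosen as a small constant multiple of $\delta$, so that it fits the error budget in Lemma~\ref{lem:bounds}. Its depth is $\mco(nt(t+\log(nt/\delta))^3)$. Because $\log(nt/\delta)=\mco(t)$ (using $\log n\le n\le t$ and $\log(1/\delta)\le t$), this is $\mco(nt^4)$.

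\textbf{Stage three.} Apply $\mathsf F_\Upsilon^\dagger$. Since $\Upsilon\cong\mbz_4^n\times\mbf_2^{\binom n2}$, this transform factorises into $n$ QFTs over $\mbz_4$ and $\binom n2$ Hadamards. The cost is $\mco(n^2)$, assuming characters are labelled compatibly so that the register encoding $h$ matches $\widehat\Upsilon$.

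\textbf{Stage four.} Tableau reconstruction takes $\mco(n^3)$ time, as stated in the text.

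\textbf{Conclusion.} Summing gives $\mco(nt^4)=\mco(n(n+\log(1/\delta))^4)$.

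The main obstacle is the arithmetic in the Lemma~\ref{lem:unitary} term, namely ensuring that $\zeta$ is only polynomially small in $\delta$. If the required $\zeta$ carried extra factors of $n$ or $r$, for example from a union bound over trials, the log term could exceed $\mco(t)$. One should confirm that $U_h$ is invoked only once, after acceptance, so that $\zeta=\Theta(\delta)$ suffices.
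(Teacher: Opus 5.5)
Your proposal is correct and follows the paper's proof essentially step by step. Like the paper, you bound each stage separately and find that isotypic compression via Lemma~\ref{lem:unitary} dominates, with $\zeta=\Theta(\delta)$ (the paper takes $\zeta=\delta/40$) and $t+\log(nt/\delta)=\mco(t)$ giving $\mco(nt^4)=\mco(n(n+\log(1/\delta))^4)$; your closing worry is unnecessary, since even $\zeta=\delta/\mathrm{poly}(n,r)$ would only add $\mco(\log n+\log r)=\mco(t)$ to the logarithm.
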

We find in the proof of Lemma~\ref{lem:compbound} that   the complexity of the algorithm is dominated by the orbit compression step, and is thereby determined by the result of Lemma~\ref{lem:unitary}. It turns out to be sufficient to take the approximation error $\zeta$ of the statement of that lemma to be of order $\delta$;   using also that $t\in\mco(n+\log(1/\delta))$, Lemma~\ref{lem:unitary} then yields Lemma~\ref{lem:compbound}.

\section{Lower bound}\label{sec:lb}
In order to establish the lower bound in Theorem~\ref{thm:main}, it suffices to  restrict the learning problem to the subset formed by full-support quadratic-phase states which appeared in the proof of the upper bound, namely
\begin{equation}
    \mathcal{E}_n:=\{\ket{\phi_{\alpha,\beta}}:(\alpha,\beta)\in \Upsilon\}\subseteq \text{Stab}_n\,.
\end{equation}
In fact, we will demonstrate that the claimed lower bound holds even for the easier task of learning (with average case failure probability at most $\delta$) states from $\mathcal{E}_n$, which immediately implies the same lower bound for the harder task of   learning arbitrary stabilizer states with worst case   failure  probability $\d$. \\

To begin, and recalling the    label $h(\x)$ (see Eq.~\eqref{eq:h}), we let $p^{(t)}$ be the ``unconditioned'' distribution of $h(\x)$, when $\x\in \mathbb{F}_2^{n\times t}$ is uniform, namely
\begin{equation*}
    p^{(t)}_h:=\Pr_{\x}[h(\x)=h]=2^{-nt}|\{\x \in \mathbb{F}_2^{n\times t}:h(\x)=h\}|
\end{equation*}
That is,  these probabilities are analogous to $p^{\text{aff}}_h$ from Eq.~\eqref{eq:qftstart}, but without the conditioning on having passed the affine-rank test. Now, the  first key observation is that  the states $\ket{\phi_{\a,\b}}\tt=2^{-nt/2}\sum_{\x}\chi_{h(\x)}((\a,\b))\ket{\x}$   are exactly the ``geometrically uniform'' orbit of a certain representation of $\Upsilon$; indeed, the one which acts as  $U_{\a,\b}\ket{\x}=\chi_{h(\x)}((\a,\b))\ket{\x}$. The utility of this observation is that it allows us to import ideas from the theory of \textit{pretty good measurements}~\cite{eldar2001quantum} to find:

\begin{restatable}{lem}{orbitdisc}\label{lem:orbitdisc}
The optimal average probability of identifying a uniformly random state in $\mathcal{E}_n$ from $t$ copies is
\begin{equation}\label{eq:optprob}
    p_{\mathrm{succ}}^*(t;\mathcal{E}_n)=\frac{1}{|\Upsilon|}\Big(\sum_h \sqrt{p^{(t)}_h}\Big)^2\, .
\end{equation}

\end{restatable}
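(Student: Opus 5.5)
We use the standard theory of discriminating a geometrically uniform family of pure states. The states $\ket{\Phi_{\a,\b}}:=\ket{\phi_{\a,\b}}\tt=U_{\a,\b}\ket{+}^{\ot nt}$ form the orbit of the fixed vector $\ket{+}^{\ot nt}$ under the diagonal unitary representation $U_{\a,\b}\ket{\x}=\chi_{h(\x)}((\a,\b))\ket{\x}$ of the abelian group $\Upsilon$.

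The first step is to decompose this representation into isotypic components. We group computational basis tuples by the value of $h(\x)$, exactly as in the upper-bound analysis but without the affine-rank restriction. This gives
\begin{equation*}
\ket{\Phi_{\a,\b}}=\sum_h \sqrt{p^{(t)}_h}\,\chi_h((\a,\b))\ket{G_h},
\end{equation*}
where $\ket{G_h}$ is the normalized uniform superposition over the fiber $\{\x: h(\x)=h\}$. The vectors $\ket{G_h}$ are orthonormal, and the whole orbit lies in their span $\mathcal{K}$. Any measurement can be compressed to $\mathcal{K}$ without changing its outcome statistics on the orbit. So the problem becomes discriminating the uniformly weighted states $\sum_h \sqrt{p_h}\chi_h(g)\ket{h}$ for $g\in\Upsilon$, with $p_h=p^{(t)}_h$, living in $\mathbb{C}^{\widehat\Upsilon}$. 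Here $h$ ranges only over the characters actually attained, and the remaining entries of $\widehat\Upsilon$ are padded with zero.

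For achievability, we apply $\mathsf F_\Upsilon^\dagger$ and measure, as in the upper bound. The calculation leading to Eq.~\eqref{eq:psum} gives success probability $|\Upsilon|^{-1}(\sum_h\sqrt{p_h})^2$ for every $g$. This measurement is exactly the pretty good (square-root) measurement for the ensemble. Equivalently, it projects onto the Fourier basis $\mathsf F_\Upsilon\ket{g}$, whose overlap with the state labelled $g$ is $\sum_h\sqrt{p_h}/\sqrt{|\Upsilon|}$.

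For optimality, we prove a matching upper bound on any POVM $\{E_g\}$ on $\mathcal{K}$. Write $\ket{\Phi_g}=U_g\ket{v}$ with $\ket v=\sum_h\sqrt{p_h}\ket h$. The average success probability is $\frac1{|\Upsilon|}\sum_g\bra{v}U_g^\dagger E_gU_g\ket v$. The operator $A:=\sum_g U_g^\dagger E_g U_g$ is positive and satisfies $\sum_g E_g=\id$. Twirling $A$ over $\Upsilon$ gives the diagonal part of $A$ in the $\ket h$ basis, which equals $|\Upsilon|$ times the diagonal of $\frac1{|\Upsilon|}\sum_g E_g$. Hence $A_{hh}=1$ for every attained $h$, and $A\ge0$ gives $|A_{hh'}|\le1$. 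Therefore
\begin{equation*}
\bra v A\ket v\le\sum_{h,h'}\sqrt{p_hp_{h'}}\,|A_{hh'}|\le\Big(\sum_h\sqrt{p_h}\Big)^2,
\end{equation*}
which is exactly the bound in Eq.~\eqref{eq:optprob}. The one check needed for $A_{hh}=1$ is that the diagonal entries of a group average of $U_g^\dagger E_gU_g$ equal averages of $E_g$'s diagonal entries, which holds because the $U_g$ are diagonal phases.

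The main subtlety is the bookkeeping of the reduction to $\mathcal K$. Arbitrary strategies, including adaptive ones, ancillas, and collective measurements on the $t$ copies, all amount to a single POVM on $(\mbc^2)^{\ot nt}$, so restricting to $\mathcal K$ loses nothing. We also need to handle characters $h$ with $p_h=0$; these lie outside $\mathcal K$ and simply drop out.
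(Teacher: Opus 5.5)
Your proof is correct and is essentially the paper's argument. Your operator $A=\sum_g U_g^\dagger E_g U_g$ is exactly $|\Upsilon|$ times the paper's symmetrized POVM element $M_{0,0}$, so the unit-diagonal/positive-$2\times 2$-minor bound matches the paper's. Your inverse-Fourier measurement is also the paper's pretty good measurement $M_{0,0}=\ket{\mu}\bra{\mu}$. The only cosmetic difference is that you skip the explicit symmetrization step, and in fact $A_{hh}=1$ follows directly from $|\chi_h(g)|=1$ and $\sum_g E_g=\mathds{1}$, without the twirling detour.
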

There is an elementary but surprisingly useful reformulation of Eq.~\eqref{eq:optprob}: let $u$ be the uniform distribution on $\widehat\Upsilon$, i.e. $u_h=1/|\Upsilon|$ for all $h$. Then the right hand side of Eq.~\eqref{eq:optprob} is (the square of) what one might call the \textit{Hellinger overlap} $A(p^{(t)},u)$ between $p^{(t)}$ and $u$:
\begin{equation}
    A(p^{(t)},u):=\sum_{h\in {\widehat\Upsilon}} \sqrt{p^{(t)}_h u_h}=\frac{1}{\sqrt{|\Upsilon|}}\sum_h\sqrt{ p^{(t)}_h}.
\end{equation}
By demonstrating that, for small $t$, the distribution $ p^{(t)}$ is far from uniform (and hence that $A(p^{(t)},u)$ is small) we will therefore prove that the average case learning problem of Lemma~\ref{lem:orbitdisc} cannot be solved, establishing our lower bound.
We will go about this by constructing a real-valued function whose distribution is noticeably different under $p^{(t)}$ and $u$, at least when $t$ is small relative to $n$. The underlying idea will be that   the uniform distribution has zero Fourier coefficient at every nonzero frequency, whereas $p^{(t)}$ has some biases in its Fourier   spectrum; we will align the phases of these biases and add them together so that they reinforce one another into something detectable. A detectable difference implies, we shall see, that the overlap of $p^{(t)}$ and $u$ is small, and hence (by Lemma~\ref{lem:orbitdisc}) that the success probability of learning is small, and hence that one needs a larger value of $t$, establishing the lower bound.\\

First, recall that for a   ``frequency''\footnote{Indeed, note that $\widehat{\widehat{\Upsilon}}\cong\Upsilon$.}  $r=(\alpha,\beta)\in\Upsilon$, and    label   $h=(a,b)\in\widehat\Upsilon$, the character   $h$ evaluated at frequency $r$ is $\chi_{h}(r):=i^{\alpha\cdot a+2\beta\cdot b}$.
Thus, for any probability distribution $q$ on   $\widehat\Upsilon$, our Fourier convention is
\begin{equation}
    \widehat q(r):=\sum_{h\in \widehat\Upsilon}q(h)\chi_{h}(r)=\expect_{h\sim q}\chi_{h}(r)\,.
\end{equation}
It will sometimes be useful to  relate the Fourier coefficients of  $p^{(t)}$ to the original random matrix $\x$, which can be done straightforwardly; indeed, recall that $p^{(t)}$ is, by definition, simply the distribution of the random label $h(\x)$ when $\x\in\mathbb F_2^{n\times t}$ is uniform.  That is, $p^{(t)}$ is the \textit{pushforward} of the uniform distribution under the map $\x\mapsto h(\x)$.  Hence, for every function $F$ of the label $h$,
\begin{equation}
    \expect_{h\sim p^{(t)}}F(h)
    =\expect_{\x\sim\operatorname{Unif}(\mathbb F_2^{n\times t})}F(h(\x))\,;
\end{equation}
we will in   particular be interested in the case  $F=\chi_r$. \\

Let us introduce some more notation. 
We  write the columns of $\x$ as $\x^{(1)},\ldots,\x^{(t)}\in\mathbb F_2^n$, and (as above) its rows as $\x_1,\ldots,\x_n\in\mathbb F_2^t$. Viewing  $r=(\alpha,\beta)$  as the    polynomial
\begin{equation}
r(x)=\sum_j\alpha_jx_j+2\sum_{j<k}\beta_{jk}x_jx_k\pmod 4\,,
\end{equation}
 Eq.~\eqref{eq:h} then gives
\begin{align}
    \chi_{h(\x)}(r)&=i^{\sum_j\alpha_jq_t(\x_j)+2\sum_{j<k}\beta_{jk}(\x_j\cdot \x_k)}\nonumber\\
    &=i^{\sum_{a=1}^t\left(\sum_j\alpha_j\x_j^{(a)} +2\sum_{j<k}\beta_{jk}\x_j^{(a)}\x_k^{(a)}\right)}\nonumber\\
    &=i^{\sum_{a=1}^t r(\x^{(a)})}\,.\label{eq:h39}
\end{align}
At this point, we can write down a  concise expression for the Fourier coefficients $\widehat{p^{(t)}}(r)$. Indeed, letting 
\begin{equation}\label{eq:def-mr}
    m(r):=\expect_{x\sim\operatorname{Unif}(\mathbb F_2^n)}i^{r(x)}\,,
\end{equation}
we have 
\begin{align*}
    \widehat{p^{(t)}}(r)
    &=\expect_{h\sim p^{(t)}}\chi_{h}(r)\\
    &=\expect_{\x^{(1)},\ldots,\x^{(t)}}
        i^{\sum_{a=1}^t r(\x^{(a)})}\\
    &=\expect_{\x^{(1)},\ldots,\x^{(t)}}
        \prod_{a=1}^t i^{r(\x^{(a)})}\\
    &=\prod_{a=1}^t\expect_{\x^{(a)}}i^{r(\x^{(a)})}\\
    &=m(r)^t\,,
\end{align*}
where we have used Eq.~\eqref{eq:h39} and the assumed independence of the   columns of $\x$.\\

We now make bespoke choices of $r$. Let us, for nonzero $ v\in \mbf_2^n$ and
\begin{equation}
    \ell_v(x)=\sum_i v_i x_i+2\sum_{i<j}v_iv_jx_ix_j \mod4\,,\label{eq:pari}
\end{equation}
make the selection
\begin{equation}
    \mathcal{R}:=\{\pm \ell_v: v\in \mbf_2^n\backslash\{0\}\} 
\end{equation}
of $D:=|\mathcal{R}|=2(2^n-1)$ such choices. This choice is slightly more natural than it might appear at first glance; indeed we simply have\footnote{To see this, let $w=\sum_i v_ix_i$; then the right hand side of Eq.~\eqref{eq:pari} is $w+2\binom w2 \ {\rm mod}\ 4=w^2\ {\rm mod}\ 4=w\ {\rm mod}\  2$.} $\ell_v(x)=v\cdot x\ {\rm mod}\ 2$, rewritten in the form of Eq.~\eqref{eq:pari} so as to  correspond to an element of $\Upsilon$.
Now,  for every nonzero $v$, the parity $\ell_v(x)$ is balanced; it equals $0$ on half of $\mbf_2^n$ and $1$ on the other half. Thus,
\begin{align}
    m(\ell_v)&=\mbe_xi^{\ell_v(x)}=\frac{1+i}{2},\\
    m(-\ell_v)&=\mbe_xi^{-\ell_v(x)}=\frac{1-i}{2}\, .
\end{align}
Consequently, $|m(r)|=2^{-1/2}$ for all $r\in \mathcal{R}$.
Fixing $t\geq 1$ and writing $m(r)^t = 2^{-t/2}\omega_{r}$ for some phase $\omega_{r}\in\mbu(1)$,
we can finally construct the long-promised function whose distribution, for small $t$, is noticeably different under $p^{(t)}$ and $u$:
\begin{equation}
    f:\widehat\Upsilon\to\mathbb{R}\,,\quad 
    f(h):=\frac{1}{\sqrt{D}}\sum_{r\in \mathcal{R}}\overline{\omega_r}\chi_{h}(r)\,.
\end{equation}
Note that $f$ is real-valued because the terms indexed by $r$ and $-r$ are complex conjugates, as $\chi_{h}(-r)=\overline{\chi_{h}(r)}$, and $\overline{\omega_{-r}}=\omega_r$. This construction ensures that, averaged over $p^{(t)}$, the factors $\overline{\omega_r}$ cancel with the phases of $m(r)^t$, so that all modes contribute coherently. This leads to a behaviour  that is noticeably different from what is seen when averaging uniformly. Specifically, we find:
\begin{restatable}{lem}{paritywitness}\label{lem:paritywitness}
We have 
$\mathbb{E}_u f = 0,\ \mathbb{E}_u f^2 = 1$, with $u$ the uniform distribution on $\widehat \Upsilon$; furthermore, $\mathbb{E}_{p^{(t)}} f = \sqrt{z}$ and $\mathbb{E}_{p^{(t)}} f^2 \leq 1+z$, where $z:=2(2^n-1)2^{-t}$.
\end{restatable}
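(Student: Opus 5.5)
The plan is to expand $f$ and $f^2$ as linear combinations of characters and then use two facts. First, under the uniform distribution $u$ on $\widehat\Upsilon$ we have $\mathbb{E}_u\chi_h(r)=\mathbb{1}[r=0]$. Second, under $p^{(t)}$ we have $\mathbb{E}_{p^{(t)}}\chi_h(r)=\widehat{p^{(t)}}(r)=m(r)^t$, as computed above. We also need the character multiplication rule $\chi_h(r)\chi_h(r')=\chi_h(r+r')$ and $\overline{\chi_h(r)}=\chi_h(-r)$, where $+$ is addition in $\Upsilon$. Throughout we work with frequencies as elements of $\Upsilon$, i.e.\ as pairs $(\alpha,\beta)$, and not as functions on $\mbf_2^n$.

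\textbf{Uniform moments.} No $r\in\mathcal R$ is zero: each $\pm\ell_v$ has $\alpha=\pm v\neq0$ in $\mbz_4^n$. Hence $\mathbb{E}_u f=0$. For the second moment, $f$ is real, so $f^2=|f|^2$, and
\[
|f|^2=\frac1D\sum_{r,r'\in\mathcal R}\overline{\omega_r}\omega_{r'}\chi_h(r-r').
\]
Under $u$ only the diagonal terms $r=r'$ survive, giving $\mathbb{E}_u f^2=\frac1D\cdot D=1$.

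\textbf{First moment under $p^{(t)}$.} We compute
\[
\mathbb{E}_{p^{(t)}}f=\frac{1}{\sqrt D}\sum_{r\in\mathcal R}\overline{\omega_r}\,m(r)^t=\frac{1}{\sqrt D}\sum_{r\in\mathcal R}2^{-t/2}=\sqrt{D\,2^{-t}}=\sqrt z .
\]

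\textbf{Second moment under $p^{(t)}$.} Here
\[
\mathbb{E}_{p^{(t)}}f^2=\frac1D\sum_{r,r'\in\mathcal R}\overline{\omega_r}\omega_{r'}\,m(r-r')^t .
\]
The diagonal contributes $1$. For the off-diagonal pairs we use the triangle inequality,
\[
\Bigl|\frac1D\sum_{r\neq r'}\overline{\omega_r}\omega_{r'}\,m(r-r')^t\Bigr|\le\frac1D\sum_{r\neq r'}|m(r-r')|^t ,
\]
so it suffices to bound $|m(r-r')|$ for $r\neq r'$ in $\mathcal R$.

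\textbf{Computing $|m(r-r')|$.} Write $r=\epsilon\ell_v$ and $r'=\epsilon'\ell_{v'}$ with $\epsilon,\epsilon'\in\{\pm1\}$. As functions $\mbf_2^n\to\mbz_4$, $\ell_v(x)$ is just the parity $v\cdot x\in\{0,1\}$. Since the map $r\mapsto i^{r(x)}$ is a homomorphism from $\Upsilon$, we have $i^{(r-r')(x)}=i^{\epsilon(v\cdot x)-\epsilon'(v'\cdot x)}$ with the parities lifted to $\{0,1\}$. This splits into three cases.
\begin{itemize}
\item $v=v'$, $\epsilon=-\epsilon'$: the exponent is $\pm2(v\cdot x)$, so $m=\mathbb{E}(-1)^{v\cdot x}=0$.
\item $v\neq v'$: the pair $(v\cdot x,v'\cdot x)$ is uniform on $\mbf_2^2$. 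Then $m=\frac14\sum_{a,b}i^{\epsilon a-\epsilon' b}=\frac{1+i^{\epsilon}}{2}\cdot\frac{1+i^{-\epsilon'}}{2}$, which has modulus $\tfrac12$.
\end{itemize}
So the only nonzero off-diagonal terms have $|m|^t=2^{-t}$. There are at most $D^2$ such pairs, so the off-diagonal part is bounded by $\frac1D\cdot D^2\,2^{-t}=D2^{-t}=z$. This gives $\mathbb{E}_{p^{(t)}}f^2\le1+z$.

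\textbf{Main obstacle.} The one delicate point is the second-moment computation: checking that $r-r'$, taken as an element of $\Upsilon$, really evaluates through the homomorphism to the lifted parity difference above. This holds because $i^{r(x)}$ depends on $r$ additively. Once that is granted, the case analysis gives $|m(r-r')|\in\{0,\tfrac12\}$ directly, and a crude count of pairs suffices.
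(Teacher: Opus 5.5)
Your proof is correct and follows essentially the same route as the paper: expand in characters, use orthogonality under $u$ and $\widehat{p^{(t)}}(r)=m(r)^t$, and split the off-diagonal pairs into cases to show $|m|\in\{0,\tfrac12\}$. The differences are cosmetic: you expand $|f|^2$ over frequencies $r-r'$ where the paper expands $f^2$ over $r+s$ (the same sum after setting $s=-r'$), and you bound the number of off-diagonal pairs by $D^2$ instead of $D(D-2)$, which still gives $1+z$.
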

Evidently, for $t\ll n$ we have $z\gg 0$, so that Lemma~\ref{lem:paritywitness} detects a strong difference between $p^{(t)}$ and the uniform distribution. This yields   a fundamental obstruction to their Hellinger overlaps being large:
\begin{restatable}{lem}{affinitywitness}\label{lem:affinitywitness}
Let $p$ and $u$ be probability distributions on the same finite set. If a real-valued function $f$ satisfies the conditions of Lemma~\ref{lem:paritywitness}, then
\begin{equation}
    1-A(p,u)^2\geq \frac{1}{8}\min\{1,z\}.
\end{equation}
\end{restatable}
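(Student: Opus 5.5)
The plan is to compare $p$ and $u$ through the test function $f$, using a Cauchy--Schwarz argument that brings in the Hellinger distance. Write $H^2:=\sum_h(\sqrt{p_h}-\sqrt{u_h})^2=2(1-A(p,u))$. The conditions of Lemma~\ref{lem:paritywitness} give the gap in means directly:
\begin{equation*}
\sqrt z=\mathbb{E}_p f-\mathbb{E}_u f=\sum_h (p_h-u_h)f(h)=\sum_h(\sqrt{p_h}-\sqrt{u_h})(\sqrt{p_h}+\sqrt{u_h})f(h)\,.
\end{equation*}
Applying Cauchy--Schwarz to this sum gives
\begin{equation*}
\sqrt z\leq H\Big(\sum_h(\sqrt{p_h}+\sqrt{u_h})^2f(h)^2\Big)^{1/2}.
\end{equation*}
Next I will use $(\sqrt{p_h}+\sqrt{u_h})^2\leq 2(p_h+u_h)$ together with the second-moment bounds $\mathbb{E}_pf^2\leq 1+z$ and $\mathbb{E}_uf^2=1$. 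This bounds the second factor by $\big(2(2+z)\big)^{1/2}$. Squaring, $z\leq 2(1-A)\cdot 2(2+z)$, so
\begin{equation*}
\varepsilon:=1-A(p,u)\geq \varepsilon_0:=\frac{z}{4(2+z)}\,.
\end{equation*}

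It then remains to convert this into a bound on $1-A^2$. We have $1-A^2=(1-A)(1+A)=\varepsilon(2-\varepsilon)$. The function $\varepsilon\mapsto\varepsilon(2-\varepsilon)$ is increasing on $[0,1]$, and $\varepsilon\in[0,1]$ because $0\le A\le 1$. Hence $1-A^2\geq\varepsilon_0(2-\varepsilon_0)$, where $\varepsilon_0\le 1/4$, so $2-\varepsilon_0\geq 7/4$.

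The final step is a case split on $z$:
\begin{itemize}
\item If $z\le 1$, then $\varepsilon_0\ge z/12$, so $1-A^2\geq \tfrac{7}{48}z\geq \tfrac18 z$.
\item If $z\geq 1$, then $\varepsilon_0$ is increasing in $z$, so $\varepsilon_0\ge 1/12$ and $1-A^2\ge 7/48\ge 1/8$.
\end{itemize}
Together these give $1-A^2\geq\frac18\min\{1,z\}$.

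The main obstacle is the constant. A cruder route, such as bounding $1-A^2\ge 1-A$ directly, loses too much and yields only $z/12$. The refinement that recovers the constant is the factor $1+A\approx 2$, which is why I keep the exact identity $\varepsilon(2-\varepsilon)$. If the constants were still too tight, I would first sharpen the Cauchy--Schwarz weighting, for instance by keeping $(\sqrt{p_h}+\sqrt{u_h})^2$ rather than bounding it by $2(p_h+u_h)$.
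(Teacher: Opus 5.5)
Your proof is correct and takes essentially the same route as the paper: the same Cauchy--Schwarz on $\sum_h f(h)(\sqrt{p_h}-\sqrt{u_h})(\sqrt{p_h}+\sqrt{u_h})$, the same passage to $1-A^2=\varepsilon(2-\varepsilon)$ via monotonicity on $[0,1]$, and the same case split on $z$. The only difference is that you bound $(\sqrt{p_h}+\sqrt{u_h})^2\le 2(p_h+u_h)$ to get $2(2+z)$, whereas the paper applies a second Cauchy--Schwarz to get the slightly sharper $(\sqrt{1+z}+1)^2$; your bound still leaves enough room, giving the constant $7/48$ where the paper gets $3/20$, and both exceed $1/8$.
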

Combining   Lemmas \ref{lem:orbitdisc}, \ref{lem:paritywitness}, and \ref{lem:affinitywitness}, we find that for every $n, t\geq 1$:
\begin{equation*}
    P_\text{err}^*(t,\mathcal{E}_n) = 1-A(p^{(t)},u)^2\geq \frac{1}{8}\min\{1, 2^{n-t}\}.
\end{equation*}

Now, to connect this error probability to the full learning problem, suppose that a $t$-copy learner identifies every state in $\text{Stab}_n$ with failure probability at most $\delta$. Then since $\mathcal{E}_n\subseteq \text{Stab}_n$, its average failure probability specifically on the uniform ensemble $\mathcal{E}_n$ is also at most $\delta$. Thus,
\begin{equation} \label{eq:deltageq}
    \delta\geq P_{\text{err}}^*(t; \mathcal{E}_n)\geq \frac{1}{8}\min\{1, 2^{n-t}\}.
\end{equation}
Let us assume $0<\delta< 1/8$. Then, the minimum in Eq.~\eqref{eq:deltageq} cannot be $1$, as that would imply $\delta\geq 1/8$. Thus $\delta \geq \frac{1}{8}\cdot2^{n-t}=2^{n-t-3}$, whence
\begin{align*}
    t&\geq n+\lceil\log_2(1/\delta)\rceil-3\,,
\end{align*}
which is the lower bound in Theorem \ref{thm:main}.

\section{Discussion}\label{sec:discussion}
In this work we have answered a natural question in quantum learning theory, establishing the precise sample-complexity of learning an unknown stabilizer state. It is interesting to find that the exact lower bounds can be asymptotically saturated by a polynomial time quantum algorithm. As far as seeking to exactly optimise resource costs goes, however, there are a few questions yet unanswered. Indeed, the main limitation of our protocol is the use of a coherent collective measurement on all input copies. It would be interesting to try to obtain the same coefficient-one sample-complexity with measurements restricted to a fixed number of copies at once. It is also open whether   our circuit depths and the  number of ancillae employed in the isotypic compression circuit are optimal, or can be further reduced.  \\

An immediate application of our results is to the learning of  an unknown $n$-qubit Clifford unitary $C$. Applying $C$ to one half  of a $2n$-qubit Bell pair (that is, producing its Choi state~\cite{watrous2018thetheory}) creates a $2n$-qubit stabilizer state (note that the Bell state is itself a stabilizer state). By our results, we can learn this Choi state using $2n+\lceil\log_2(1/\delta)\rceil+4$ such queries, from which $C$ itself can then be readily recovered (up to global phase). For a given failure probability $\delta\in\Omega(1/{\rm poly}\, n)$ this appears to improve upon known sample-complexities: Low~\cite{low2009learning} gives an algorithm which requires both $2n+1$ calls of $C$, and $2n$ calls of $C\ad$;  more recently Skaras and Ginsparg \cite{skaras2026process} gave a forward-only algorithm which requires $4n+3$ calls to $C$. 
Although we show in Appendix~\ref{sec:cliff} that our $n$-dependence is optimal for Clifford learning, the question of the $\delta$-dependence is a little less clear. Indeed, and although learning   an element of a set of nonorthogonal states cannot be done with zero failure probability, Clifford unitaries \textit{can} be   learnt exactly from a finite number of queries~\cite{low2009learning,skaras2026process}. Indeed, for $\delta\in\mco(1/\exp n)$, our sample complexities can rise above $4n$, thereby becoming worse than those of Low~\cite{low2009learning}.  So, in the Clifford case there in principle remains the possibility of improving the $\delta$-dependence, while retaining our optimal $n$-dependence. This is in contrast to our results for learning general stabilizer states (i.e., that are not promised to be the Choi state of a Clifford), where our $\delta$-dependence is optimal.  \\

A second application is to the recently studied problem of
the  sample-complexity of (approximately) \textit{cloning}   an unknown stabilizer state~\cite{bansal2026cloning}. It is not \emph{a priori} clear that cloning and learning should have the same sample-complexities; indeed, perhaps one can (approximately) clone a state without learning it. Exact learning, on the other hand, is clearly sufficient for cloning; our algorithm therefore gives an efficient construction for stabilizer state cloning with an explicitly known scaling coefficient. \\

Finally, one could explore generalisations of our strategy to further groups and representations. For example, this might involve choosing an abelian subgroup $H$ of some target group $G$, and considering a two-step  learning procedure in which one attempts to find an element in $G/H$ which identifies the unknown element up to a remaining degree of freedom on $\mbc[H]$, which is then recovered by a Fourier transform. Naturally, it would be very interesting to both find further examples of groups for which this is efficient, and a  general theory of the conditions     which allow for such efficiency.\\

\begin{acknowledgements}
RC, ML and MW acknowledge support by the Laboratory Directed Research and Development (LDRD) program of Los Alamos National Laboratory (LANL) under project number 20260043DR, and by LANL’s ASC Beyond Moore’s Law project. This work was also supported by the Quantum Science Center (QSC), a National Quantum Information Science Research Center of the U.S. Department of Energy (DOE). 
Part of this work was done while MCC was visiting LANL as a lecturer for their 2026 Quantum Computing Summer School. 
The key technical ideas of the proofs of our results are largely due to GPT-5.6. All AI-generated proofs were verified by the human authors, and  in most cases substantially rewritten in order to meet our standards of clarity.
\end{acknowledgements}


\bibliography{quantum}

\onecolumngrid
\appendix
\renewcommand{\theHequation}{\theHsection.\arabic{equation}}

\section{Stabilizer learning via Bell sampling}\label{sec:bell}
In this Appendix we review Montanaro's Bell sampling algorithm~\cite{montanaro2017learning}, as well as  a simple modification of it. We begin by introducing some notation. We let $\sigma_{00}=\id_2$ be the 2-dimensional identity matrix, $\sg_{01}=X,\ \sg_{10}=Z,\ \sg_{11}=ZX$, and for $s\in\mbf_2^{2n}$ define $\sg_s=\sg_{s_1s_2}\ot \sg_{s_3s_4}\ldots \ot \sg_{s_{2n-1}s_{2n}}$. Every $n$-qubit Pauli string is in this fashion identified with an element of $\mbf_2^{2n}$, in such a way that the natural group laws are (projectively) preserved. That is, $\sg_s\sg_t$ is, up to a possible minus sign, the Pauli string represented by $\sg_{s+t}$ (with the addition in $\mbf_2^{2n}$). Now, any stabilizer state $\ket\psi$ is uniquely specified by a commuting subgroup $G$ of Pauli matrices of size $|G|=2^n$ such that, for $P\in G$, $P\ket\psi=\pm \ket\psi$ (and $\braket{\psi|Q|\psi}=0$ for $Q\not\in G$), along with the knowledge of the specific pattern of $\pm$ signs. Back in $\mbf_2^{2n}$, this corresponds to an $n$-dimensional subspace $T\subseteq \mbf_2^{2n}$, spanned by the set of strings whose corresponding Paulis are (up to a possible phase) in $G$. So, one way to approach stabilizer state learning is to first learn $T$ (and then worry about the phases later).  \\

A key observation of Montanaro~\cite{montanaro2017learning} is that  Bell sampling on two copies of $\ket{\psi}$ gives an outcome $r\in \mathbb{F}_2^{2n}$ that is uniformly distributed on an affine coset $a+T$, for some $a$ that depends only on $\ket\psi$. Then, from independent Bell outcomes $r_0, r_1,\dots,r_m$, the differences $v_i=r_i-r_0$ for $1\leq i\leq m$ are independent and uniformly distributed elements of $T$, which can then be recovered whenever we have found enough   vectors so as to span $T$. Having received (say) $m\geq n$ such samples, the probability of their spanning $T$ is given by
\begin{equation}
    \Pr[\text{span}\{v_1,\dots,v_m\}=T]=\prod_{j=0}^{n-1}(1-2^{j-m})\,;
\end{equation}
indeed, these vectors span $T$ if and only if the (uniformly random) matrix in $\mbf_2^{n\times m}$ (choose any basis) formed from   the columns of the $v_i$ is of full row rank. There are $(2^m-1)(2^m-2)\cdots (2^m-2^{n-1})$ such matrices; dividing by $2^{nm}$ gives the stated probability. We can  then bound the failure probability as
\begin{equation}
    \Pr[\text{span}\{v_1,\dots,v_m\}\neq T]=1-\prod_{j=0}^{n-1}(1-2^{j-m})\leq \sum_{j=0}^{n-1} 2^{j-m}\leq 2^{n-m}.
\end{equation}

Thus for fixed failure probability at most $\delta$, we find that $m=n+\lceil \log_2(1/\delta)\rceil$ is sufficient. Since $r_0,\dots,r_m$ are $m+1$ Bell samples, which each use two input copies, this part needs $2(m+1)=2n+2\lceil \log_2(1/\delta)\rceil+2$ copies. To determine the signs of the Pauli expectation values, Ref.~\cite{montanaro2017learning} proposes simply measuring   each of the now-known $n$ basis Paulis on a new copy of $\ket{\psi}$, taking the  total number of required copies to $3n+2\lceil \log_2(1/\delta)\rceil+2$. Finally, Ref.~\cite{montanaro2017learning} (effectively) takes $\delta=2^{-n}$, leading to a total of $5n+2$ copies. 
In the sign-learning step, if one were instead to measure in the simultaneous eigenbasis of the $n$ (commuting) Paulis known to span $T$, one would need only a single measurement, thus saving $n-1$ copies. With this optimisation (at the cost of a more involved measurement), and allowing for an arbitrary failure probability $\delta$, we conclude that  $2n+2\lceil \log_2(1/\delta)\rceil+3$ copies of $\ket\psi$ are sufficient in this setup.

\section{Efficient isotypic compression}\label{sec:fibers}
\noindent
In this somewhat technical appendix we prove Lemma~\ref{lem:unitary}. We begin, however, with a preliminary lemma:

\begin{restatable}{lem}{fiber}\label{lem:fiber}
 Let $p=(\x_1,\dots,\x_{k})\in(\mbf_2^t)^{\times k}$ and $p'=(\x_1',\dots,\x_{k}')\in(\mbf_2^t)^{\times k}$ satisfy $q_t(\x_i)=q_t(\x_i')$ and $\x_i\cdot\x_j=\x_i'\cdot\x_j'$ for all $1\le i< j\le k$, and that  both $\{\mathbf 1_t,\x_1,\ldots,\x_k\}$ and $\{\mathbf 1_t,\x'_1,\ldots,\x'_k\}$ are linearly independent. Then there exists $g\in G_t:= \{g\in GL_t(\mathbb{F}_2):q_t(gx)=q_t(x) \quad \forall x\in \mathbb{F}_2^t\}$ such that $g\mathbf 1_t=\mathbf 1_t$ and $g\x_i=\x'_i
\ \forall\ 1\leq i\leq k$.
\end{restatable}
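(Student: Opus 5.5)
The plan is to recognise the statement as a Witt extension theorem for the $\mbz_4$-valued quadratic form $q_t(x)=|x| \bmod 4$ on $\mbf_2^t$, in the spirit of Wood~\cite{wood1993witts}.

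First, I would record the structural identity $q_t(x+y)=q_t(x)+q_t(y)+2(x\cdot y)\pmod 4$. It shows that $q_t$ is a $\mbz_4$-valued quadratic refinement of the dot product, whose polar form is the standard, nondegenerate bilinear form on $\mbf_2^t$. Since $x\cdot x=|x|\bmod 2=x\cdot\mathbf 1_t$, the vector $\mathbf 1_t$ is the characteristic vector of the form. Every $g\in G_t$ therefore preserves the dot product, and it also preserves the evaluation $x\mapsto x\cdot\mathbf 1_t$.

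Second, set $W=\operatorname{span}\{\mathbf 1_t,\x_1,\dots,\x_k\}$ and $W'=\operatorname{span}\{\mathbf 1_t,\x'_1,\dots,\x'_k\}$. By the linear independence hypothesis there is a unique linear bijection $\varphi:W\to W'$ with $\varphi(\mathbf 1_t)=\mathbf 1_t$ and $\varphi(\x_i)=\x'_i$. I would check that $\varphi$ is a $q_t$-isometry. Expanding $q_t\bigl(c\mathbf 1_t+\sum_i c_i\x_i\bigr)$ by the identity above gives a polynomial in the following data:
\begin{itemize}
\item the values $q_t(\x_i)$, which agree with $q_t(\x'_i)$ by hypothesis;
\item the dot products $\x_i\cdot\x_j$, which agree by hypothesis;
\item the products $\x_i\cdot\mathbf 1_t\equiv q_t(\x_i)\bmod 2$, which agree;
\item $q_t(\mathbf 1_t)=t\bmod 4$, which is fixed.
\end{itemize}
Hence $q_t(\varphi w)=q_t(w)$ for all $w\in W$.

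Third, I would extend $\varphi$ to all of $\mbf_2^t$. The cleanest route is to cite Wood's Witt theorem for $\mbz_4$-valued quadratic forms with nondegenerate polar form: any isometry between subspaces extends to a global isometry, i.e.\ an element of $G_t$. To make the argument self-contained instead, I would exploit $\mathbf 1_t\in W$. Then $W^\perp$ and $W'^\perp$ lie in the even-weight subspace $E=\mathbf 1_t^\perp$. On $E$ the dot product is alternating, and $Q=q_t/2$ is an honest $\mbf_2$-valued quadratic form with polar form the dot product. The problem thus reduces to the classical Witt theorem over $\mbf_2$ for quadratic forms.

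Concretely, I would proceed by induction on $t-\dim W$. Pick $y\notin W$ and find $y'\notin W'$ such that the following hold:
\begin{itemize}
\item $y'\cdot\varphi(w)=y\cdot w$ for all $w\in W$; this is possible by nondegeneracy, since the solutions form an affine coset of $W'^\perp$;
\item $q_t(y')=q_t(y)$; this is possible by adjusting $y'$ within that coset, using the even-part form $Q$ on $W'^\perp$.
\end{itemize}
Then extend $\varphi$ to $W+\langle y\rangle$ and repeat.

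The main obstacle is exactly this choice of $y'$, including the degenerate cases where $W^\perp\cap W$ is nonzero (radical issues, as in the classical proof of Witt's theorem in characteristic $2$). I would first try to sidestep the case analysis by citing Wood's theorem directly. Failing that, I would reproduce the standard characteristic-$2$ Witt argument on $E$, using that $W\ni\mathbf 1_t$ controls the non-alternating part.
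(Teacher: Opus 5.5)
Your proposal follows the paper's proof. The paper also checks that the linear map fixing $\mathbf 1_t$ and sending $\x_i\mapsto\x_i'$ is a $q_t$-isometry, using the polarisation identity together with $q_t(\mathbf 1_t)=t$ and $\mathbf 1_t\cdot\x_i\equiv q_t(\x_i)\pmod 2$. It then extends this map to an element of $G_t$ by citing Wood's Witt-type theorem.

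You have stated Wood's theorem incorrectly. It is false that every $q_t$-isometry between subspaces extends to an element of $G_t$, even though the dot product is nondegenerate. For $t=5$, the map $\mathbf 1_5\mapsto e_1$ (with $e_1$ the first standard basis vector) is an isometry between one-dimensional subspaces, since both vectors have $q_5=1$. Yet every $g\in G_t$ fixes $\mathbf 1_t$, so this map cannot extend. Wood's theorem has two extra hypotheses: the domain and the image must meet $\operatorname{span}\{\mathbf 1_t\}$ in the same subspace, and the map must be the identity there. Both hold here precisely because $\mathbf 1_t$ lies in both spans and $\varphi(\mathbf 1_t)=\mathbf 1_t$. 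That is why the lemma includes $\mathbf 1_t$ in the independent sets, and it is what the paper verifies.

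Your self-contained alternative is only a sketch. Showing that a suitable $y'$ exists at each step is the entire content of Witt's theorem, so asserting it proves nothing. The reduction to the classical theorem on $E=\mathbf 1_t^\perp$ is clean for odd $t$, where $\mathbb F_2^t=E\oplus\operatorname{span}\{\mathbf 1_t\}$ and $E$ is nondegenerate. For even $t$, however, $\mathbf 1_t\in E$ spans the radical of the dot product on $E$, and you would also have to extend from $E$ back to $\mathbb F_2^t$. So the proof rests on the citation, stated in its correct form, exactly as in the paper.
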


\begin{proof}
By linear independence,  the linear map 
\begin{equation}
\Phi: {\rm span}\{\mathbf 1_t, \x_1,\dots,\x_k\}\to {\rm span}\{\mathbf 1_t, \x_1',\dots,\x_k'\}    
\end{equation}
 which fixes $\mathbf 1_t$ and maps $\x_j$ to $\x_j'$ is manifestly  well-defined. Let us show that   it is furthermore a $q_t$-isometry, i.e. that $q_t(\Phi(x))=q_t(x)$. We use the identity
 \begin{equation}
q_t\big(\sum_j c_ju_j\big)=\sum_jc_jq_t(u_j)+2\sum_{j<k}c_jc_k(u_j\cdot u_k) \quad (\text{mod } 4)\,,\label{eq:qid}
 \end{equation}
which follows from the readily verified $q_t(a+b)=q_t(a)+q_t(b)+2a\cdot b\  ({\rm mod}\ 4)$ and induction. 
It follows from Eq.~\eqref{eq:qid}  that $q_t$ of any linear combination of   vectors is completely determined by the $q_t$ values of those  vectors, and their pairwise dot products. By the assumptions of the lemma   combined with the simple observations that $q_t(\mathbf 1_t)=t\ ({\rm mod}\ 4)$ and $\mathbf 1_t\cdot \x_j=q_t(\x_j)\ ({\rm mod}\ 2)$, it follows that $\Phi$ is a $q_t$-isometry. For example, 
\begin{align}
q_t(\Phi(\x_1+\x_2)) &=q_t(\Phi(\x_1)+\Phi(\x_2)) \\
 &= q_t(\Phi(\x_1))+q_t(\Phi(\x_2))+2\Phi(\x_1)\cdot\Phi(\x_2) \quad (\text{mod } 4)\\
&=q_t(\x_1')+q_t(\x_2')+2\x_1'\cdot\x_2' \hspace{24.6mm} (\text{mod } 4)\\
&=q_t(\x_1)+q_t(\x_2)+2\x_1\cdot\x_2 \hspace{24.6mm} (\text{mod } 4)\\
&=q_t( \x_1+\x_2)\,.
\end{align}
Now, we would like to show that $\Phi$ (or, more precisely, some extension of $\Phi$ to a map   $\widetilde{\Phi}:\mbf_2^t\to\mbf_2^t$) may be represented by an element of $G_t$. To that end, we invoke an (unnumbered, sadly) theorem of Wood (\cite{wood1993witts}, Section 4), which states that, defining $I(\mathbb{F}_2^t):=\{x:x\cdot x=0\}=\mathbf 1_t^\perp$ and $ I(\mathbb{F}_2^t)^\perp = \text{span}\{\mathbf 1_t\}$, such an extension will exist precisely if (i) the intersection of the domain of $\Phi$ with $I(\mathbb{F}_2^t)^\perp$ equals the intersection of the range of $\Phi$ with $I(\mathbb{F}_2^t)^\perp$, and (ii) $\Phi$ restricted to $I(\mathbb{F}_2^t)^\perp$ is the identity map. This is clearly true in our case, so that such a $g$ exists. 
\end{proof}

Incidentally, it follows fairly quickly from setting $k=n$ in Lemma~\ref{lem:fiber} that each nonempty accepted isotypic $\mathcal{F}_h$ is exactly a single $G_t$-orbit. To see this, we need first to   verify both the preservation by $G_t$ of the binary dot product, and its stabilisation of $\mathbf 1_t$. The former follows quickly from the polarisation identity $q_t(a+b)=q_t(a)+q_t(b)+2a\cdot b\  ({\rm mod}\ 4)$, so that $2a\cdot b  \equiv 2(ga)\cdot (gb)\ ({\rm mod}\ 4)$ for all $a,b$, implying the preservation of the binary dot product (and, therefore, $g^\mst g = \id$). For the latter, note that for all $x$ we have $x\cdot \mathbf 1_t= q_t(x)\ {\rm mod}\,2= q_t(gx)\ {\rm mod}\,2 = (gx)\cdot 1_t$, whence $x^\mst g^\mst \mathbf 1_t = x^\mst \mathbf 1_t$ for all $x$, so that $ g^\mst \mathbf 1_t=\mathbf 1_t$; multiplying by $g$ then establishes the stabilisation of $\mathbf 1_t$.   Thus, each nonempty accepted isotypic $\mathcal{F}_h$ is exactly a single $G_t$-orbit. $G_t$ will return in Appendix~\ref{sec:cliff}.\\

\noindent
We now turn to the proof of Lemma~\ref{lem:unitary}, which we recall to state:
\unitary*
\begin{proof} 
Let us explicitly describe a circuit construction of $U_h$ for any $h$ corresponding to a non-empty fiber (for an empty fiber, we take $U_h=\id$). Recall that each basis tuple in   $\mcf_h$ is represented by an $n\times t$ matrix $\x$, and that we should like $U_h$  to, when acting on $\ket{0^{nt}}$, prepare the uniform  state $\ket{\mcf_h}$, which we recall to be
\begin{equation}
    \ket{\mcf_h}=\frac{1}{\sqrt{|\mcf_h|}}\sum_{\x\in\mcf_h}\ket{\x}.
\end{equation}
Hence, every valid matrix $\x$ appears with the same amplitude $1/\sqrt{|\mcf_h|}$, so we will construct a coherent superposition over such matrices $\x$, sequentially over the rows. Suppose the first $j-1$ rows $p=(\x_1,\dots,\x_{j-1})$ have already been chosen; we will refer to this selection as the prefix. Now, from $\text{affrank}(\x)=n$ (recall Eq.~\eqref{eq:affrank}) it follows that the next allowed row must be linearly independent from the previous rows (and from $\mathbf 1_t:=(1,1, \ldots 1)\in\mbf_2^t$) and satisfy the conditions of being in the fiber, which we summarise as the     condition
\begin{equation}
    \mcc_j(p)=\left\{\x\in\mathbb{F}_2^t:
 \begin{array}{l}
 q_t(\x)=h_j,\quad \x\cdot \x_k=h_{kj}\ (k<j),\\
 \x\notin\text{span}\{\mathbf 1_t,\x_1,\dots,\x_{j-1}\}
 \end{array}\right\}.
\end{equation}
The first important observation is that, for each fixed $h$ and $j$,    every valid $(j-1)$-row prefix has the same number of valid choices for the next row; let us see this. To begin, if we take two prefixes $p=(\x_1,\dots,\x_{j-1})$ and 
$p'=(\x_1',\dots,\x_{j-1}')$, with both prefixes corresponding to the same $h$,   the corresponding rows fulfill the same conditions, namely  that $q_t(\x_k)=q_t(\x_k')=h_k$ and $\x_k\cdot\x_l=\x_k'\cdot\x_l'=h_{kl}$. Thus, the linear map $\phi$ that sends $\mathbf 1_t$ to $\mathbf 1_t$ and sends each $\x_k$ to $\x_k'$ preserves the quadratic and bilinear structure, and by Lemma~\ref{lem:fiber} we can take it (or, more precisely, some extension of it to a map on all of $\mathbb{F}_2^t$) to be represented by a global $q_t$-isometry $g\in G_t = \{g\in GL_t(\mathbb{F}_2):q_t(g\x)=q_t(\x) \quad \forall \x\in \mathbb{F}_2^t\}$.
Now, take $\x\in \mcc_j(p)$. As $g$ preserves both $q_t$ (by definition) and (as previously verified) the binary dot product, we have
\begin{equation}
    g\x\cdot \x_k'=g\x\cdot g\x_k=\x\cdot \x_k=h_{kj}.
\end{equation}
Furthermore, $g$ is invertible, so $\x\notin \text{span}\{\mathbf 1_t, \x_1, \dots,\x_{j-1}\}$ implies that $g\x \notin \text{span}\{\mathbf 1_t, \x_1', \dots,\x_{j-1}'\}$, so $g\x \in \mcc_j(p')$. 
Conversely, for $\x'\in\mcc_j(p')$ we have $g^{-1}\x' \in \mcc_j(p)$. Thus, $\x\mapsto g\x$ is a bijection from $\mcc_j(p)$ to $\mcc_j(p')$, and thus the count $d_j:=|\mcc_j(p)|=|\mcc_j(p')|$
depends on $h$ and $j$, but not the choice of prefix $p$.
Next, let us check that every row $\x\in \mcc_j(p)$ can actually extend to a full valid matrix in $\mcf_h$. Let us fix a complete matrix $(\boldsymbol{z}_1,\dots,\boldsymbol{z}_n)\in \mcf_h$, and some compatible  prefix and valid new row $(\x_1,\dots,\x_{j-1},\x)$. By assumption   this prefix has the same quadratic and bilinear data as the first $j$ rows of the reference matrix; thus the linear map which acts on the span of $\{\mathbf 1_t, \boldsymbol{z}_1,\ldots, \boldsymbol{z}_{j}\}$ as $\mathbf  1_t\mapsto\mathbf  1_t,\  \boldsymbol{z}_k\mapsto \x_k \text{ for } k<j, \  \boldsymbol{z}_j\mapsto\x$ is a  $q_t$-isometry, and by Lemma~\ref{lem:fiber} extends to some $g\in G_t$. If we then apply $g$ to all the remaining rows in $\boldsymbol{z}$, we get the full tuple $(g\boldsymbol{z}_{1},\ldots,g\boldsymbol{z}_{n})= (\x_1,\dots,\x_{j-1},\x,g\boldsymbol{z}_{j+1},\dots,g\boldsymbol{z}_n)$, which   must also be  in $\mcf_h$ (for, as discussed, $G_t$ preserves fibers). Therefore, any locally valid choice we make for the next row $\x$ actually can be extended to a valid full matrix, so we can make local decisions about each row to build the full matrix.\\

Now, at each stage $j$, every valid prefix leads to $d_j$ possibilities for the next row, so that  the size of the fiber satisfies $|\mcf_h|=\prod_{j=1}^n d_j$.
Thus, if each row is prepared with conditional amplitude $1/\sqrt{d_j}$, every state corresponding to a full matrix will have amplitude $\prod_{j=1}^n {1}/{\sqrt{d_j}}={1}/{\sqrt{|\mcf_h|}}$, as desired. Let us make this explicit.  Say we have fixed a prefix $p$ of previously prepared rows of the matrix. In our current row $\x$, let $\s$ denote the prefix of bits that have already been prepared. Let $N_{\s}$ denote the numbers of valid rows which begin with bitstring $\s$. For example, at the beginning of the row, when $\s$ is empty, $N_\emptyset=|\mcc_j(p)|=N_{0}+N_{1}$, as the first bit can only be $0$ or $1$. The basic idea is that at each successive bit after prefix $\s$, the circuit will compute the two counts $N_{\s0}$ and $N_{\s1}$, and apply the $Y$-rotation which implements
\begin{equation}\label{eq:bitrot}
    \ket{0}\mapsto \sqrt{\frac{N_{\s 0}}{N_{\s}}}\ket{0}+\sqrt{\frac{N_{\s 1}}{N_{\s}}}\ket{1}
\end{equation}
on the next qubit of the row register (if $N_{\s}=0$ we make no rotation). To verify that this construction gives us the correct superposition, let us write out the bits in the new row as $\x=(x^{(1)},\dots,x^{(t)})$, where we prepare the bits from left to right. Then for $0\leq \ell\leq t$, let $\s=(s_1,\dots,s_{\ell})\in \mathbb{F}^{\ell}_2$ be a possible prefix for the first $\ell$ bits of $\x$. Then the conditional amplitudes telescope,
\begin{equation}
    \prod_{{\ell}=0}^{t-1}\sqrt{\frac{N_{s_{{\ell}+1}}}{N_{s_{\ell}}}}=\sqrt{\frac{N_{\x}}{N_\emptyset}}=\frac{1}{\sqrt{|\mcc_j(p)|}},
\end{equation}
so that after the $t$ rotations we will have correctly arrived at the desired uniform superposition state.
It remains to actually find the counts $N_{\s}$. We must first find $N_{\s}'$, the number of solutions to $q_t(\x)=h_j$ and $\x\cdot \x_k=h_{kj}$, and then subtract $N_{\s}''$, the number of those solutions which also fulfill $x\in \text{span}\{\mathbf 1_t, \x_1,\dots, \x_{j-1}\}$.
To find $N_{\s}'$ for fixed $\s$, recall that one requirement set by the definition of $\mcc_j$ is that $\x\cdot \x_k=h_{kj}$ for $(k<j)$. For a bit prefix $
\s\in\mathbb F_2^\ell$, let
\begin{equation}
L_{\s}:=\left\{\x\in\mathbb F_2^t:(x^{(1)},\dots,x^{(\ell)})=\boldsymbol{s},\quad \x\cdot \x_k=h_{kj}\ (k<j)\right\}.
\end{equation}
All the constraints defining $L_{\s}$ are affine linear in the unknown vector $\x$, as the previous rows $\x_k$ as well as the $h_{kj}$ are already fixed. If we then use Gaussian elimination on this linear system of constraints, we either get that the system is unsolvable and $N_{\s}'=0$, or Gaussian elimination gives a particular solution $\x_0$ and a basis $\boldsymbol{v}_1,\dots,\boldsymbol{v}_m$ for the homogeneous solution space, meaning $|L_{\s}|=2^m$. Every vector in $L_{\s}$ can then be written uniquely as
\begin{equation}
    \x=\x_0+\sum_{r=1}^m y_r\boldsymbol{v}_r,\qquad y=(y_1,\dots,y_m)\in \mathbb{F}_2^m.
\end{equation}
This parameterisation is injective because the basis vectors are linearly independent; hence each $y\in \mathbb{F}_2^m$ represents exactly one vector $\x\in L_{\s}$, so counting the number of valid $y$ is equivalent to counting the number of valid $\x$.
Next, we introduce the other constraint, that 
\begin{equation}
    q_t\Big(\x_0+\sum_{r=1}^m y_r \boldsymbol{v}_r\Big)=h_j\mod 4.
\end{equation}
Recalling the previously mentioned \textit{polarisation identity} $q_t(\boldsymbol{u}+\boldsymbol{v})=q_t(\boldsymbol{u})+q_t(\boldsymbol{v})+2(\boldsymbol{u}\cdot \boldsymbol{v})\mod 4$ then yields 
\begin{align}
    P(y):=q_t(\x)&=q_t(\x_0)+\sum_{r=1}^m y_rq_t(\boldsymbol{v}_r)+2\sum_{r=1}^m y_r(\x_0\cdot \boldsymbol{v}_r)+2\sum_{r<r'}y_ry_{r'}(\boldsymbol{v}_r\cdot \boldsymbol{v}_{r'})\mod4\\
    &=c+\sum_{r=1}^m a_ry_r+2\sum_{r<r'}b_{rr'}y_ry_{r'}\mod4 \label{py},
\end{align}
where $c=q_t(\x_0),\ a_r=q_t(\boldsymbol{v}_r)+2(\x_0\cdot \boldsymbol{v}_r)\mod 4 $ and $
    b_{rr'}= \boldsymbol{v}_r\cdot \boldsymbol{v}_{r'}$. 
Counting the number of completions is then equivalent to counting $N_{\s}'=\# \{y\in \mathbb{F}_2^m:P(y)=h_j\mod 4\}$. 
To find this count, consider the  ``filter'' $\id_{\{P(y)=h_j\}}=\frac{1}{4}\sum_{\ell=0}^3 i^{\ell(P(y)-h_j)}$, so that
\begin{align}
    N_{\s}'=\frac{1}{4}\sum_{\ell=0}^3 i^{-\ell h_j}\sum_y i^{\ell P(y)}
    = \frac{1}{4}\Big(2^m+(-1)^{h_j}\sum_y (-1)^{P(y)}+2\Re\Big\{i^{-h_j}\sum_y i^{P(y)}\Big\}\Big)\,.
\end{align}
One can compute both of the two sums   efficiently. The first, $\sum_y (-1)^{P(y)}$, only depends on the parity of $P(y)$, which depends only on the parity of $c+\sum_r a_ry_r,$ so there are no quadratic terms. In fact, we have  
\begin{equation}
    \sum_y (-1)^{P(y)} = (-1)^c\sum_y (-1)^{\sum_r a_r y_r}  = (-1)^c\prod_r \big( 1+ (-1)^{a_r}\big)\,,
\end{equation}
so that the sum is $(-1)^c 2^m$ if all $r$ $a_r$ terms are even, and $0$ otherwise.
For the second sum, consider the diagonal unitary $ U_P = i^c\prod_{r}S_r^{a_r}\prod_{r<r'}\text{CZ}_{rr'}^{b_{rr'}}$, which manifestly satisfies  $U_P\ket{y}=i^{P(y)}\ket{y}$. Since $\ket{+}^{\otimes m}=2^{-m/2}\sum_y\ket{y}$,
\begin{equation}
    \bra{+}^{\otimes m}U_P \ket{+}^{\otimes m} = 2^{-m}\sum_{y,z}\braket{z|U_P|y}=2^{-m}\sum_y i^{P(y)}.
\end{equation}
As $U_P$ can be implemented by a Clifford circuit, $\bra{+}^{\otimes m}U_P \ket{+}^{\otimes m}$ is an inner product between two stabilizer states, and can, including its complex phase, be calculated   in time $\mco(t^3)$~\cite{garcia2017geometry,bravyi2016improved}.
We must  also find $N_{\s}''$. The additional linear constraint that $x\in \text{span}\{\mathbf 1_t, \x_1,\dots,\x_{j-1}\}$ can just be added to the set of constraints defining $L_{\s}$, which is still an affine solution space, so that $N''_{\s}$ can   be calculated by a nearly identical procedure to $N_{\s}'$, again employing   Gaussian elimination to solve the modified linear constraints.
So, we can compute $N_{\s 0}=N'_{\s 0}-N_{\s 0}''$ and $N_{\s 1}=N'_{\s 1}-N_{\s 1}''$ in time $\mco(t^3)$.  Now, to perform the   rotations of Eq.~\eqref{eq:bitrot} we need to (reversibly) calculate $\theta_{\boldsymbol{s}}:=\arcsin\sqrt{{N_{\boldsymbol{s}1}}/{N_{\boldsymbol{s}}}}$ (say, to $b$ bits of precision). To begin, ignore the arcsin, and suppose we have determined the first $j-1$ bits of the integer $R= \lfloor 2^b \sqrt{{N_{\boldsymbol{s}1}}/{N_{\boldsymbol{s}}}}\rfloor$.  The $j$\textsuperscript{th} bit will be 1 if and only if $2^b \sqrt{{N_{\boldsymbol{s}1}}/{N_{\boldsymbol{s}}}} \geq R + 2^{b-j}$. Squaring both sides, this is just a matter of comparing two $\mco(t+b)$-bit integers. The squaring   takes (naively) time $\mco((t+b)^2)$, and the whole square root calculation therefore  $\mco(b(t+b)^2)$. To subsequently evaluate the arcsin we can simply take the first $\mco(b)$ terms of its Taylor series; evaluating any such term takes time $\mco(b^2)$ (strictly, to guarantee this convergence we need our angle   to be bounded away from 1; we can guarantee this by checking if $N_{\boldsymbol{s}1}/N_{\boldsymbol{s}}>1/2$ and, if it is, instead calculating $\pi/2 - \arcsin(\sqrt{N_{\boldsymbol{s}0}/N_{\boldsymbol{s}}})$). Putting everything together, we find that we can produce a $b$-bit estimate of $\theta_{\boldsymbol{s}}$ in time $\mco(t^3 + b(t+b)^2+b^3) = \mco((t+b)^3)$; we can make each of these computations reversible using $\mco((t+b)^3)$ ancillae without changing the asymptotic scaling of the depth~\cite{bennett1973logical}. This whole calculation is repeated for all $nt$ bits of $\boldsymbol{X}$, leading to an overall complexity of $\mco(nt(t+b)^3)$.     \\

The remaining question, then, is how many bits of precision we need in these angle calculations. Let $R_1,\dots,R_{nt}$ denote the ideal rotations, and $\widetilde{R}_1,\dots,\widetilde R_{nt}$  their implemented approximations. Given some target label $h$, at the $k$\textsuperscript{th} rotation step we need to implement $W_k=\sum_c \ketbra{c}\ot R_k(2\theta_{h,c,k})$, where the control (computational basis) state $\ket c$ encodes the previously discussed classical information needed to calculate $\theta_{h,c,k}$ (apart from $h$ itself), and $R_k$ denotes a rotation about the $Y$-axis acting on the qubit which is the target of timestep $k$ (to be clear, we do not attempt to synthesise each rotation individually; we instead synthesise rotations by $2^{-j}$ for $1\leq j\leq b$ and then for each target rotation we apply some subset of these rotation gates, as determined by $c$). Now, suppose that $\max_{h,c,k} \|\widetilde R_k(\theta_{h,c,k})-R_k(\theta_{h,c,k})\|_\infty \leq \varepsilon $ (so that also $\max_{k} \|\widetilde W_k-W_k \|_\infty \leq \varepsilon $). The readily verified telescoping identity
 \begin{equation}
   \widetilde W_M\cdots\widetilde W_1-W_M\cdots W_1=
\sum_{k=1}^{M}\widetilde W_M\cdots\widetilde W_{k+1}(\widetilde W_k-W_k)W_{k-1}\cdots W_1  
 \end{equation}
 then yields, with $V_h$ and $\widetilde U_h$ the exact and approximate implementations of the entire unitary,
 \begin{equation}
     \| V_h\ket{0}- \widetilde U_h\ket 0\|_2 \leq \| V_h- \widetilde U_h\|_\infty  \leq\sum_{k=1}^{nt}\left\|\widetilde W_k-W_k\right\|_{\infty} \leq nt\varepsilon
 \end{equation}
Therefore, we can choose $\varepsilon\leq \zeta/(nt)$ to make the final (2-norm) error at most $\zeta$;   this can be synthesised into   depth $\mco(b\log(b/\varepsilon))$~\cite{ross2014optimal,dawson2005solovay}, where the extra factors of $b$ (which we can take to be $\lceil\log(nt/\zeta)\rceil$) come from the fact that each rotation is broken into $b$ rotations by various powers of two, each of which should be implemented to error $\varepsilon/b$.  
Finally, we of course uncompute the ancilla register. Putting everything together, then, our final estimate of the complexity of the isotypic compression step is   $\mco(nt(t+\log(nt/\zeta))^3)$.  
\end{proof}

\section{Proof of the lower bound}\label{sec:lbapp}

\orbitdisc*
\begin{proof}
Recalling   our definition $\mcf_h = \{\x:h(\x)=h\text{ and }\text{affrank}(\x)=n\}$  to exclude non-full rank states, let us introduce the relaxed analogues  $\mcg_h = \{\x:h(\x)=h\}$, so for example that $\ket{\phi_{\a,\b}}^{\otimes t} = \sum_{h\in \widehat{\Upsilon}} \sqrt{p^{(t)}_h} \chi_h((\a,\b)) \ket{\mathcal{G}_h}$. Now, for ${(\a,\b)}\in\Upsilon$, let us introduce the   diagonal unitaries $U_{\a,\b}$, which act as $U_{\a,\b}\ket{\mathcal{G}_h}=\chi_h((\a,\b))\ket{\mathcal{G}_h}$. Note that from the satisfaction by the characters of $\chi_h((\a,\b)+(\a',\b'))=\chi_h((\a,\b))\chi_h((\a',\b'))$, we immediately have   $U_{\a,\b}U_{\a',\b'}=U_{(\a,\b)+(\a',\b')}$. Thus, the family of states is a (\textit{geometrically uniform}, if you like) orbit of the abelian group given by the matrices $U_{\a,\b}$:
\begin{equation}
    \ket{\phi_{\a,\b}}^{\otimes t} = U_{\a,\b}\ket{\phi_{0,0}}^{\otimes t}\,,
\end{equation}
so that they are optimally discriminated by a \textit{pretty good   measurement}~\cite{eldar2001quantum}. Let us go through the details. \\

\noindent
First, suppose that $\{N_{\a,\b}\}_{ ({\a,\b})\in \Upsilon}$ is any POVM intended to guess the label $({\a,\b})$, with average success probability 
\begin{equation}
    p_{\mathrm{succ}}(N)=\frac{1}{|\Upsilon|}\sum_{(\a,\b)\in \Upsilon}\bra{\phi_{\a,\b}}^{\otimes t} N_{\a,\b} \ket{\phi_{\a,\b}}^{\otimes t}\,.
\end{equation}
In a well-known trick, we can define another measurement by averaging $N$ over $\Upsilon$:
\begin{equation}
    M_{\a,\b}:=\frac{1}{|\Upsilon|}\sum_{(\a',\b')\in \Upsilon}U_{\a',\b'}N_{(\a,\b)-(\a',\b')}U_{\a',\b'}^\dagger.
\end{equation}
Note that this is indeed a POVM; its elements are PSD as they are sums of PSD operators, and 
\begin{equation*}\phantom{.}\hspace{-6mm}
    \sum_{\a,\b} M_{\a,\b} = \frac{1}{|\Upsilon|}\sum_{\substack{\a,\b\\\a',\b'}}U_{\a',\b'}N_{(\a,\b)-(\a',\b')}U_{\a',\b'}\ad
    = \frac{1}{|\Upsilon|} \sum_{\a',\b'} U_{\a',\b'} \sum_{\a,\b} N_{({\a,\b})-({\a',\b'})}U_{\a',\b'}^\dagger
    = \frac{1}{|\Upsilon|}\sum_{\a',\b'} U_{\a',\b'}\id U_{\a',\b'}^\dagger
    = \id\,.
\end{equation*}
Now, we can calculate:
\begin{align}
    p_{\mathrm{succ}}(M)&=\frac{1}{|\Upsilon|^2}\sum_{\a,\b,\a',\b'}\bra{\phi_{\a,\b}}^{\otimes t}U_{\a',\b'} N_{(\a,\b)-(\a',\b')}U_{\a',\b'}^\dagger \ket{\phi_{\a,\b}}^{\otimes t}\\
    &= \frac{1}{|\Upsilon|^2}\sum_{\a,\b,\a',\b'}\bra{\phi_{(\a,\b)-(\a',\b')}}^{\otimes t}N_{(\a,\b)-(\a',\b')}\ket{\phi_{(\a,\b)-(\a',\b')}}^{\otimes t}\\
    &= \frac{1}{|\Upsilon|}\sum_{\a'',\b''} \bra{\phi_{\a'',\b''}}^{\otimes t}N_{\a'',\b''} \ket{\phi_{\a'',\b''}}^{\otimes t}\\
    &= p_{\mathrm{succ}}(N).
\end{align}
So, we can without loss of generality take the optimal measurement to be symmetrised in the above sense.
Therefore, we can consider just these  measurements $M_{\a,\b}$ when calculating for average success. 

\noindent
Next, note that the symmetrised measurement satisfies $M_{({\a,\b})+({\a',\b'})}=U_{\a',\b'}M_{\a,\b}U_{\a',\b'}^\dagger,$ and in particular $M_{\a,\b}=U_{\a,\b}M_{0,0}U_{\a,\b}^\dagger$. This yields:
\begin{equation}
    \bra{\phi_{\a,\b}}^{\otimes t} M_{\a,\b} \ket{\phi_{\a,\b}}^{\otimes t} = \bra{\phi_{0,0}}^{\otimes t}U_{\a,\b}^\dagger U_{\a,\b} M_{0,0} U_{\a,\b}^\dagger U_{\a,\b}\ket{\phi_{0,0}}^{\otimes t} = \bra{\phi_{0,0}}^{\otimes t}M_{0,0}\ket{\phi_{0,0}}^{\otimes t}.
\end{equation}
Thus, every state has the same conditional success probability. Furthermore, for  $h,k\in\widehat\Upsilon$:
\begin{equation}
    \bra{\mathcal{G}_h}M_{\a,\b}\ket{\mathcal{G}_k}=\bra{\mathcal{G}_h}U_{\a,\b} M_{0,0}U_{\a,\b}^\dagger \ket{\mathcal{G}_k}=\chi_h((\a,\b))\overline{\chi_k((\a,\b))}\bra{\mathcal{G}_h}M_{0,0}\ket{\mathcal{G}_k}.
\end{equation}
Now, character orthogonality says that $\sum_{(\a,\b)\in \Upsilon} \chi_h((\a,\b))\overline{\chi_k((\a,\b))}=|\Upsilon|1_{h=k}$, so that 
\begin{equation}
    \bra{\mathcal{G}_h} \sum_{(\a,\b)\in \Upsilon}M_{\a,\b} \ket{\mathcal{G}_k}=\bra{\mathcal{G}_h}M_{0,0}\ket{\mathcal{G}_k}\sum_{\a,\b} \chi_h((\a,\b))\overline{\chi_k((\a,\b))}=|\Upsilon|1_{h=k}\bra{\mathcal{G}_h}M_{0,0}\ket{\mathcal{G}_k};
\end{equation}
from the normalisation $\sum_{\a,\b} M_{\a,\b}=I$ we conclude that for all $h$ we have $\bra{\mathcal{G}_h}M_{0,0}\ket{\mathcal{G}_h}=1/|\Upsilon|$.
Turning to  the off-diagonal elements,  $M_{0,0}$ being positive semidefinite forces its $2\times 2$  submatrix on $\ket{\mathcal{G}_h}, \ket{\mathcal{G}_k}$ to  be positive semidefinite:
\begin{equation}
    \begin{pmatrix} 
        {1}/{|\Upsilon|}& (M_{0,0})_{hk}\\
        \overline{(M_{0,0})_{hk}}& 1/{|\Upsilon|}
    \end{pmatrix}\succeq0,
\end{equation}
so that its determinant ${1}/{|\Upsilon|^2}-|(M_{0,0})_{hk}|^2$ is nonnegative, whence $|(M_{0,0})_{hk}|\leq {1}/{|\Upsilon|}$. Putting some of the recently discussed points together, we see that the average probability is given by
\begin{align*}
    p_{\mathrm{succ}}(M) = \bra{\phi_{0,0}}^{\otimes t}M_{0,0}\ket{\phi_{0,0}}^{\otimes t}&=\sum_{h,k}\sqrt{p^{(t)}_hp^{(t)}_k}\bra{\mathcal{G}_h}M_{0,0}\ket{\mathcal{G}_k}\\
    &\leq \sum_{h,k}\sqrt{p^{(t)}_hp^{(t)}_k}\lvert\bra{\mathcal{G}_h}M_{0,0}\ket{\mathcal{G}_k}\rvert\\
    &\leq \frac{1}{|\Upsilon|}\sum_{h,k}\sqrt{p^{(t)}_hp^{(t)}_k}\\
    &=\frac{1}{|\Upsilon|}\Big(\sum_h\sqrt{p^{(t)}_h}\Big)^2\, ,
\end{align*}
where we have without loss of generality evaluated on $\ket{\phi_{0,0}}\tt$.  So, no measurement can enjoy an average success   $p_{\mathrm{succ}}>(\sum_h\sqrt{p^{(t)}_h})^2/|\Upsilon|$.\\

\noindent
Let us now see that this   bound is attainable.
Define the unnormalised vector $\ket{\mu}=\frac{1}{\sqrt{|\Upsilon|}}\sum_{h:p^{(t)}_h>0}\ket{\mathcal{G}_h}$, and let $M_{0,0}=\ketbra{\mu}$. Then for  $h,k$ with $p^{(t)}_h,p^{(t)}_k>0$, we find $\bra{\mathcal{G}_h}M_{0,0}\ket{\mathcal{G}_k}=\braket{\mcg_h|\mu}\braket{\mu|\mcg_k}={1}/{|\Upsilon|}$.
Taking the other measurements to be $M_{\a,\b}=U_{\a,\b}M_{0,0}U_{\a,\b}^\dagger$, we obtain by character orthogonality   a valid POVM on the subspace spanned by  the  $\ket{\mcg_h}$ with $p^{(t)}_h>0$ (including thereby the support of the $t$-th tensor powers of the states in $\mce_n$).
Then the success probability for $(\a,\b)=(0,0)$ is
\begin{equation}
    \bra{\phi_{0,0}}^{\ot t} M_{0,0}\ket{\phi_{0,0}}^{\ot t}=\lvert\braket{\mu|\phi_{0,0}^{\ot t}}\rvert^2=\frac{1}{|\Upsilon|}\Big(\sum_h \sqrt{p^{(t)}_h}\Big)^2\,,
\end{equation}
saturating   the upper bound, and thus giving the optimal success probability. 

\end{proof}

\paritywitness*
\begin{proof}
Recall that 
\begin{equation}
    f(h)=\frac{1}{\sqrt{D}}\sum_{r\in \mathcal{R}}\overline{\omega_r}\chi_{h}(r)\,,
\end{equation}
where $\omega_r$ is the phase of $m(r)^t$. 
First, for every uniformly sampled $h\in \widehat\Upsilon$, character orthogonality says that $\mbe_u \chi_{h}(r)=\d_{r,0}$; but every $r\in \mathcal{R}$ is nonzero, so that $\mbe_u f = \frac{1}{\sqrt{D}}\sum_{r\in \mathcal{R}} \overline{\omega_r}\mbe_u \chi_{h}(r) =0$. 
For the second claim, we can expand:
\begin{equation}
    f(h)^2= \frac{1}{D}\sum_{r,s\in \mathcal{R}} \overline{\omega_r}\,\overline{\omega_s}\chi_h(r+s).
\end{equation}
When we average over $u$, every term again contributes $0$ by character orthogonality except when $r+s=0$. Therefore,
\begin{align}
    \mbe_u f^2 = \frac{1}{D}\sum_{r\in \mathcal{R}} \overline{\omega_r}\,\overline{\omega_{-r}}= \frac{1}{D}\sum_{r\in \mathcal{R}}\overline{\omega_r}\,\omega_r= \frac{1}{D}\sum_{r\in \mathcal{R}}1 = 1.
\end{align}
For the third claim, we use the fact that $\mbe_{h\sim p^{(t)}} \chi_{h}(r)=m(r)^t$:
\begin{equation}
    \mbe_{p^{(t)}} f = \frac{1}{\sqrt{D}}\sum_{r\in \mathcal{R}}\overline{\omega_r}m(r)^t= \frac{1}{\sqrt{D}}\sum_{r\in \mathcal{R}}\overline{\omega_r}(2^{-t/2}\omega_r)= \frac{1}{\sqrt{D}}\sum_{r\in \mathcal{R}}2^{-t/2}= \sqrt{D2^{-t}}= \sqrt{z}.
\end{equation}
For the final claim, we again expand the square, obtaining $\mbe_{p^{(t)}} f^2 =\frac{1}{D}\sum_{r,s\in \mathcal{R}}\overline{\omega_r}\,\overline{\omega_s}m(r+s)^t$.
Here, we have three possible cases: one case when $r=\sigma \ell_v$, $s=\tau \ell_w$, where $v\neq w$ and $\sigma,\tau\in \{\pm1\}$, and the other two when $v=w$ so $r=s$ or $r=-s$.
We start with the case when $s=-r$. Here, $m(r+s)^t=m(0)^t=1$. There are $D$ such pairs, so in total this contributes $1$ to the average.
Next, for the case when $s=r=\pm \ell_v$, we have $r+s=\pm 2 \ell_v$. Then, as the parity function is balanced,
\begin{equation}
    m(\pm 2 \ell_v)= \mbe_x i^{\pm 2\ell_v(x)}= \mbe_x(-1)^{\ell_v(x)}=0,
\end{equation}
so this case contributes nothing.
Finally, when $r=\sigma \ell_v$, $s=\tau \ell_w$, where $v\neq w$ and $\sigma,\tau\in \{\pm1\}$, we have:
\begin{equation}
    |m(r+s)|=|\mbe_x i^{\sigma \ell_v(x)+\tau \ell_w (x)}|= |\mbe_x i^{\sigma \ell_v(x)}||\mbe_x i^{\tau \ell_w(x)}|= 2^{-1/2}\cdot 2^{-1/2}=\frac{1}{2}.
\end{equation}
Then, $|m(r+s)^t|=2^{-t}$. There are $D(D-2)$ choices here because for each of the $D$ choices for $r$, we count all the $s$ except when $s=\pm r$. Thus in total, this case contributes at most  $(D-2)2^{-t}$.
Then in total,
\begin{equation}
    \mbe_{p^{(t)}} f^2 \leq 1+(D-2)2^{-t}\leq 1+D2^{-t}= 1+z.
\end{equation}
\end{proof}

\affinitywitness*
\begin{proof}
We begin by recalling that the presupposed conditions  are: $\mathbb{E}_u f = 0,\ \mathbb{E}_u f^2 = 1$;  $\mathbb{E}_{p} f = \sqrt{z},$ and $\mathbb{E}_{p} f^2 \leq 1+z$, where $z:=2(2^n-1)2^{-t}>0$ for $n>1$. These suppositions imply
\begin{align}
\Big(\sum_h f(h)[p_h-u_h]\Big)^2 =(\mbe_p[f]-\mbe_u[f])^2 = (\sqrt z-0)^2 = z\,;
\end{align}
Cauchy-Schwarz then yields
\begin{align}
    z &= \Big(\sum_h f(h)[p_h-u_h]\Big)^2\\
    &= \Big(\sum_h f(h)(\sqrt{p_h}+\sqrt{u_h})(\sqrt{p_h}-\sqrt{u_h})\Big)^2\\
    &\leq \Big(\sum_h(\sqrt{p_h}-\sqrt{u_h})^2\Big)\Big(\sum_h f(h)^2(\sqrt{p_h}+\sqrt{u_h})^2\Big).
\end{align}
For the first parenthetical, notice that 
\begin{align}
    \Big(\sum_h(\sqrt{p_h}-\sqrt{u_h})^2\Big) &= \sum_h p_h-2\sum_h\sqrt{p_h u_h}+\sum_h u_h\\
    &=2(1-\sum_h\sqrt{p_hu_h})\\
    &=2(1-A(p,u)).
\end{align}
For the second, we can again expand the squares and dutifully apply Cauchy-Schwarz:
\begin{align}
    \sum_h f(h)^2(\sqrt{p_h}+\sqrt{u_h})^2 &= \sum_h f(h)^2p_h+2\sum_h f(h)^2\sqrt{p_hu_h}+\sum_h f(h)^2u_h\\
    &\leq \sum_h f(h)^2 p_h+2\sqrt{\Big(\sum_h f(h)^2 p_h\Big)\Big(\sum_h f(h)^2 u_h\Big)} + \sum_h f(h)^2 u_h\\
    &= \left(\sqrt{\sum_h f(h)^2 p_h}+\sqrt{\sum_h f(h)^2 u_h}\right)^2\\
    &\leq (\sqrt{1+z}+1)^2.
\end{align}
Putting the pieces together, we have
\begin{align}
    z\leq 2(1-A(p,u))(\sqrt{1+z}+1)^2\,;
\end{align}
a little rearranging then gives
\begin{align}
    1-A(p,u)&\geq \frac{z}{2(\sqrt{1+z}+1)^2}\label{eq:xkz}\\
    &= \frac{z(\sqrt{1+z}-1)}{2(\sqrt{1+z}+1)^2(\sqrt{1+z}-1)}\\
    &= \frac{\sqrt{1+z}-1}{2(\sqrt{1+z}+1)}\\
    &= \frac{\sqrt{1+z}+1}{2(\sqrt{1+z}+1)}-\frac{2}{2(\sqrt{1+z}+1)}\\
    &= \frac{1}{2}-\frac{1}{\sqrt{1+z}+1}\label{xinc}.
\end{align}

We now have a lower bound on $1-A(p,u)$, but recall $P_{\mathrm{succ}}^*=A(p,u)^2$, so we want to manipulate our bound to be on $A(p,u)^2$ instead. The simplest way comes from splitting into cases $0\leq z\leq 1$ and $z > 1$. For readability in the following details, we will denote $A(p,u)$ with just $A$, and let $k(z)=z/(2(\sqrt{1+z}+1)^2)$, and $x=1-A$. Note that $0\leq A\leq 1$, so that also $0\leq x\leq 1$, and $1-A^2=2x-x^2$. Now, as the function $2x-x^2$ is increasing on $x\in [0,1]$, and as by Eq.~\eqref{eq:xkz} $x\geq k(z)$, we conclude
\begin{equation}
    1-A^2\geq 2k(z)-k(z)^2\,.
\end{equation}
Now, for the case when $0\leq z\leq 1$, note that $2\leq \sqrt{1+z}+1<\frac{5}{2}$, so 
\begin{equation}
    \frac{2z}{25}\leq k(z)\leq \frac{z}{8}\leq \frac{1}{8}.
\end{equation}
Then,
\begin{align}
    2k(z)-k(z)^2&=k(z)(2-k(z))\geq \frac{2z}{25}(2-\frac{1}{8})= \frac{3z}{20}>\frac{z}{8}.
\end{align}
For the case $z\geq1$, note that the form of $k(z)$ written exposed in Eq.~\eqref{xinc} shows that $k(z)$ is increasing. Therefore, for $z>1$, we can write:
\begin{equation}
    2k(z)-k(z)^2\geq 2k(1)-k(1)^2\geq \frac{3}{20}>\frac{1}{8}.
\end{equation}
Thus, when $0\leq z\leq 1$, $1-A^2>\frac{z}{8}$, and when $z>1$, $1-A^2>\frac{1}{8}$. Together, $1-A^2\geq \frac{1}{8}\min\{1,z\}$.
\end{proof}

\section{On the optimality of Clifford learning}\label{sec:cliff}
\noindent
In this appendix we argue that any algorithm for learning, with failure probability at most 1/8, an unknown $n$-qubit Clifford unitary $C$ from forward query access (i.e., without access to $C\ad$)   requires at least $2n$ queries. First, by the principle of deferred measurement~\cite{nielsen2000quantum}, we can associate to any scheme involving mid-circuit measurements (and subsequent conditioning on the results of those measurements) an equivalent procedure in which all of the measuring happens at the end. We can therefore without loss of generality model any $t$-query learning procedure as performing a measurement on the state~\cite{chiribella2013identification}
\begin{equation}\label{eq:pure_circ}
 \ket{\Psi_C}=W_t(C\otimes\id)W_{t-1}\cdots W_1(C\otimes\id)W_0\ket0\,,
\end{equation}
where we have allowed the ($C$-independent) isometries $W_i$ to act   also on arbitrary ancilla spaces. Now, inspecting Eq.~\eqref{eq:pure_circ}, we see that there exists some ($C$-independent) linear map $A_t$ such that $\ket{\Psi_C}=A_t (\sdket{C}^{\ot t})$, where $\sdket{C}=(C\ot\id)\ket\Phi$ is the Choi state of $C$. We emphasise that we do not care if $A_t$ is a CPTP map; it is simply a temporary tool that allows us to notice that
\begin{equation}\label{eq:span_dims}
 \dim{\rm span}\{\ket{\Psi_{C}}:C\in \mathsf{Cl}_n\}=
 \dim{\rm span}\{A_t\sdket{C}^{\otimes t}:C\in \mathsf{Cl}_n\}\leq\dim{\rm span}\{\sdket{C}^{\otimes t}:C\in \mathsf{Cl}_n\}\,.
\end{equation}
The usefulness of this observation comes from the relation between the difficulty of (uniform) state discrimination on   some set and the dimension of the space spanned by the elements of that set; indeed, let $\{\ket{\psi_x}:x\in \mcx\}$ be pure states, all contained in some subspace $W$, with  $\Pi_W$ be the orthogonal projector onto $W$, and let $\{M_x\}_{x\in \mcx}$ be the POVM elements corresponding to  the guesses of some procedure for learning states from $\mcx$.  Since $\ketbra{\psi_x}\leq\Pi_W$, we have that the success probability (under the uniform prior) is bounded as
\begin{equation}\label{eq:orbprob}
 p_{\mathrm{av}}=\frac{1}{|\mcx|}\sum_x\tr\left(M_x\ketbra{\psi_x}{\psi_x}\right)\leq\frac{1}{|\mcx|}\sum_x\tr(M_x\Pi_W)\leq\frac{\tr\Pi_W}{|\mcx|}=\frac {\dim W}{|\mcx|}\,,
\end{equation}
where we have used $\sum_x M_x\leq \id$. 
In our present circumstances, this translates into a bound 
\begin{equation}\label{eq:pavbound}
p_{\rm av}(t)\leq {\dim V_{2n,t}}/{|\mathsf{Cl}_n|}\,,    
\end{equation}
where the dimension of $V_{2n,t}:={\rm span}\{\ket S^{\otimes t}:S\in{\rm Stab}_{2n}\} $ 
is the quantity to which we now turn (note that the Choi states of $n$-qubit Cliffords are a strict subset of the total set of  $2n$-qubit stabilizer states, but considering as we are in Eq.~\eqref{eq:pavbound} anyway  a lower bound  allows us to pass to this entire set, which we can analyse somewhat more readily). Now,
let us recall from Appendix~\ref{sec:fibers} the group
\begin{equation}
G_t= \{
g\in GL_t(\mathbb{F}_2)
:
q_t(gx)=q_t(x)
\quad
\forall x\in\mathbb{F}_2^t
\}.
\end{equation}
Here, also recall $q_t(x):=|x|\pmod 4$ denotes the Hamming weight of $x\in\mathbb{F}_2^t$ reduced modulo four, and thus $G_t$ is precisely the subgroup of the invertible binary matrices that preserves Hamming weight modulo four. The reason mod $4$ appears rather than mod $2$ is ultimately the Clifford/stabilizer phases $i^{|x|}$: knowing $|x|\pmod4$ determines $i^{|x|}$. We will need a few facts about $G_t$. The first is that, with $g\in G_t$ acting on $\mbf_2^t$ in its defining representation, and acting diagonally on $(\mbf_2^{2n})^t\cong (\mbf_2^t)^{2n}$ via the natural representation $R_{2n}(g)=g^{\ot {2n}}$, we have $R_{2n}(g)\ket S^{\otimes t}=\ket S^{\otimes t}$ for all $g\in G_t$, and all $2n$-qubit stabilizer states $\ket S$~\cite{gross2021schur}. The fact we need is that,  when $t=2m+1$ is odd, one can show that we have $G_t\cong O^{\varepsilon_t}(2m,Q)$, where $Q=|x|/2\mod 2 $ is the $G_t$-preserved quadratic form on $E$, and $\varepsilon_t$ is 1 for $t\cong 1,7\mod 8$, and $-1$ for $t\cong 3,5\mod 8$~\cite{wood1993witts}. We don't particularly need to understand anything about the groups $O^{\pm}(2m,Q)$, except for the fact that their order can be calculated~\cite{wood1993witts,wilson2009finite}, leading to (for odd $t\geq 3$)
\begin{equation}\label{eq:ogt}
|G_t|=2^{(t^2-3t+4)/2}(1-\varepsilon_t2^{-m})\prod_{j=1}^{m-1}(1-2^{-2j})\,.
\end{equation}
This is relevant because the stabilisation of the elements of ${\rm Stab}_{2n}$ by $R_{2n}$ implies, lying therefore  as they do in the trivial isotypic of $R_{2n}$, that
\begin{equation}
   \dim  V_{2n,t}\leq \tr[\frac{1}{|G_t|}\sum_{g\in G_t}R_{2n}(g)]=\frac{1}{|G_t|}\sum_{g\in G_t}2^{2n\dim \ker (g-\id_t)}\,.
\end{equation}
So, we find ourselves interested in the question of how many $g$ can have (say) $\rank(g-\id_t)=r$. Fixing such a $g$, let  $W=\ker(g-\id_t)$. Now, $ {\rm im}(g-\id_t)\subseteq W^\perp$; indeed, note that for $w\in W$ and any $v\in\mbf_2^t$ we have 
\begin{equation}
((g-\id_t)v)\cdot w = (gv)\cdot w - v\cdot w = v\cdot (g^{-1}w) - v\cdot w = v\cdot w - v\cdot w =0\,,
\end{equation}
so that $(g-\id_t)$ induces a well-defined map  $ \mbf_2^t/W\longrightarrow W^\perp$. With both the domain and range having dimension $r$, there are, for a fixed $W$, at most $2^{r^2}$ such maps,  and thus at most that many possible $g$. On the other hand, we find in the proof of Lemma~\ref{lem:fourier} that there are at most $4\cdot 2^{r(t-r)}$ such subspaces $W$, so that (for $2n>t$) we have
\begin{align}
    \dim V_{2n,t}&\leq \frac{1}{|G_t|}\sum_{g\in G_t}2^{2n\dim \ker (g-\id_t)}\\
     &\leq\frac1{|G_t|}\left(
 2^{2nt}+4\sum_{r=1}^t2^{rt+2n(t-r)}\right)\\
 &\leq\frac{2^{2nt}}{|G_t|}
 \left(1+4\sum_{r=1}^{\infty}2^{-r(2n-t)}\right)\\
 &=\frac{2^{2nt}}{|G_t|}
 \left(1+\frac4{2^{2n-t}-1}\right)\,.
\end{align}
Additionally, recall $|\mathsf{Cl}_{n}|=2^{n^2+2n}\prod_{j=1}^n(4^j-1)$; or,  introducing $ Q_k:=\prod_{j=1}^k(1-4^{-j})$ to simplify some upcoming notation, $|\mathsf{Cl}_{n}|=2^{2n^2+3n}Q_n$. \\

At this point, we are ready to put the various pieces together to show that at least $2n$ queries are required in order to learn an unknown Clifford with  failure probability at most $1/8$. To do this, we will set $t=2n-1$, and show that the success probability is in that instance strictly less than $7/8$. This choice of $t$ (and its subsequent implication that $m=(t-1)/2=n-1$) leads from Eq.~\eqref{eq:ogt} to
\begin{equation}\label{eq:ogt2}
|G_{2n-1}|=2^{2n^2-5n+4}(1-\varepsilon_{2n-1}2^{-(n-1)})Q_{n-2}\,,
\end{equation}
whence
\begin{align}
    p_{\rm av}(2n-1)&\leq \frac{\dim V_{2n,2n-1}}{|\mathsf{Cl}_n|}\\
    &\leq \frac{2^{2n(2n-1)}}{|G_{2n-1}|\,|\mathsf{Cl}_{n}|}\left(1+\frac4{2-1}\right) \\
    &= 5\frac{2^{2n(2n-1)}}{(2^{2n^2-5n+4}(1-\varepsilon_{2n-1}2^{-(n-1)})Q_{n-2}) (2^{2n^2+3n}Q_n)} \\
    &= \frac{5}{16(1-\varepsilon_{2n-1}2^{-(n-1)})Q_{n-2}Q_n}\label{eq:denfun} \\
    &\leq \frac{45}{64(1-\varepsilon_{2n-1}2^{-(n-1)})} \,,\label{eq:pav-bound}
\end{align}
where we have used that (for all $k$)  we have
\begin{equation}
    Q_k=\prod_{j=1}^k(1-4^{-j})\geq\prod_{j=1}^\infty(1-4^{-j})\geq 1-\sum_{j=1}^\infty4^{-j}=\frac23\,.
\end{equation}
We want to show that Eq.~\eqref{eq:pav-bound} is less than $7/8$ for all $n$. Now, if $n\geq 4$,  we have $1-\varepsilon_{2n-1}2^{-(n-1)}\geq 7/8$, so that $p_{\rm av}(2n-1)\leq 45/56<7/8$, establishing the result for $n\geq 4$. If $n=3$, it is easy to see that the factors after the 16 in the denominator of Eq.~\eqref{eq:denfun} are respectively bounded below by $3/4, 3/4$, and $2/3$, leading to $p_{\rm av}(5)\leq 5/(16(3/4)(3/4)(2/3))=5/6< 7/8$. If $n=2$,  then $\varepsilon_{2n-1}=-1$; evaluating $Q_2$ (and using $Q_0=1$) we find $p_{\rm av}(3)\leq 8/27<7/8$. Finally, if $n=1$, we are allowed but a single query; as   ${\rm span}\{\sdket{C}^{\ot t=1}\,:\,C\in\mathsf{Cl}_1\}\subset \mbc^4$, we have from Eq.~\eqref{eq:orbprob} that $p_{\rm av}(1)\leq 4/|\mathsf{Cl}_1|=1/6<7/8$. This concludes the argument.

\section{Further minutiae}\label{sec:minutiae}
\noindent
In this appendix we provide the  proofs of the remaining technical lemmas (which for convenience we restate).
\fullsupport*
\begin{proof} 
The $t-1$ differences in the rank calculation of Eq.~\eqref{eq:affrank} are independent and uniform. The probability that the resulting $n\times (t-1)$ matrix has full row rank is $a_{n,t}$; indeed suppose the first $j$ rows are already linearly independent. Then they span a $j$-dimensional subspace of $\mathbb{F}_2^{t-1}$, which has $2^j$ elements. The next row is uniform on $\mathbb F_2^{t-1}$ and extends the independent set iff it is outside that span, which happens with probability $1-\frac{2^j}{2^{t-1}}$. Multiplying all the conditional probabilities gives:
\[
a_{n,t}=\prod_{j=0}^{n-1}(1-2^{j-(t-1)}).
\]

Now, $1-a_{n,t}$ is the probability that the $t-1$ vectors do not fully span  $\mbf_2^n$, thereby each living in the same proper hyperplane of dimension  $n-1$ (of which there are $(2^n-1)$). A uniformly random vector belongs to such a hyperplane with probability 1/2; the probability that each  vector falls into the same such hyperplane is therefore $(2^{-(t-1)})$. We can therefore union bound to conclude that
\[
1-a_{n,t}\leq (2^n-1)(2^{-(t-1)})\leq 2^{n-t+1}.
\]

\end{proof}

\fullprob*
\begin{proof} 
We recall that there are $K_n=2^n\prod_{j=1}^n (2^j+1)$ total stabilizer states~\cite{aaronson2004improved} and (as a result of their bijection with elements of $\Upsilon$) $2^{(n^2+3n)/2}$ full-support stabilizer states. We can then readily calculate
\begin{equation}
p_n=\frac{2^{(n^2+3n)/2}}{2^n\prod_{j=1}^n (2^j+1)}= \frac{2^{n(n+1)/2}}{\prod_{j=1}^n (2^j+1)}= \frac{\prod_{j=1}^n 2^j}{\prod_{j=1}^n (2^j+1)}= \prod_{j=1}^n\frac{1}{1+2^{-j}}\, ,
\end{equation}
where we have used $\sum_{j=1}^n j = n(n+1)/2$.
Thus,
\[
1/p_n=\prod_{j=1}^n (1+2^{-j})\leq e^{\sum_{j=1}^n2^{-j}}=e^{1-2^{-n}}<e<3.
\]
where we use the fact that for any real number $x$, we have $1+x\leq e^x$.
\end{proof}

\fourier*
\begin{proof}
Recall that from the main text that   the relevant success probability is 
$   \big(\sum_h \sqrt{p^{\text{aff}}_h}\big)^2/|\Upsilon|$, where
 $p_h^{\mathrm{aff}}=|\mathcal F_h|/(a_{n,t}2^{nt})$. So, proving this lemma becomes an exercise in trying to understand some properties of this probability distribution.\\

We begin our proof with some preliminary calculations. First, let us choose $\ket S\in \text{Stab}_n$ and let $L_n(k)$ be the number of $\ket T\in \text{Stab}_n$ such that $\lvert\braket{S|T}\rvert^2=2^{-k}$.  
Notably, this count does not depend on the choice of $\ket S$, as Cliffords act transitively on stabilizer states. Thus, in calculating $L_n(k)$, we can take $\ket S=\ket{0^n}$. A stabilizer state has nonzero overlap with $\ket{0^n}$ when its affine computational-basis support contains $\ket {0^n}$;  its support is then a linear subspace $L\leq \mathbb{F}_2^n$, and its overlap with $\ket{0^n}$ is $2^{-\dim(L)/2}$. So, for a squared overlap of $2^{-k}$, we can take any $k$-dimensional subspace of $\mbf_2^n$, of which there are $\binom nk_2$, where $\binom\cdot\cdot_2$ indicates the Gaussian binomial coefficient. After choosing a basis of $L$, the phase function has the form
    \begin{equation}
        f(x)=\sum_{a=1}^k \alpha_a x_a + 2\sum_{1\leq a<b\leq k} \beta_{ab}x_ax_b\mod 4,
    \end{equation}
    where there are $4^k$ choices of linear phase $\alpha$ and $2^{\binom k2}$ choices of quadratic cross terms $\beta$. Thus in total, the number of phase functions of a fixed support is:
    \begin{equation}
        4^k2^{\binom k2}=2^{2k+k(k-1)/2}=2^{(k^2+3k)/2}\,,
    \end{equation}
so that
\begin{equation}
    L_n(k)=\binom nk_2 2^{(k^2+3k)/2}. \label{eq:lnk}
\end{equation}
Next we define
\begin{equation}
    C_{n,t}:=\sum_{k=1}^n L_n(k)2^{-kt}=\sum_{k=1}^n \left(\binom nk_2 2^{(k^2+3k)/2}\right) 2^{-k(n+s)}\,,
\end{equation}
which is manifestly  the total overlap with a given stabilizer state (after taking $t$ copies) of all the stabilizer states. It will turn out to be useful to bound $C_{n,t}$; to that end, we begin with the Gaussian binomial term. We have:
    \begin{align}
        \binom nk_2 &:= \prod_{j=0}^{k-1}\frac{2^n-2^j}{2^k-2^j}\\
        &= 2^{k(n-k)}\prod_{j=0}^{k-1}\frac{1-2^{j-n}}{1-2^{j-k}}\\
        &\leq 2^{k(n-k)}\prod_{r=1}^k \frac{1}{1-2^{-r}}\\
        &\leq 2^{k(n-k)}\cdot 2\prod_{j=2}^\infty \frac{1}{1-2^{-j}}\\
        &\leq 2^{k(n-k)}\cdot \frac{2}{1-\sum_{j=2}^\infty 2^{-j}}\\
        &\leq 4\cdot 2^{k(n-k)}\,.
    \end{align}
    Thus, we can bound:
    \begin{equation} \label{cns}
        C_{n,n+s} =\sum_{k=1}^n 2^{-k(n+s)}\binom nk_2 2^{(k^2+3k)/2}\leq \sum_{k=1}^n 4\cdot 2^{-k(n+s)}2^{k(n-k)}2^{(k^2+3k)/2}\leq \sum_{k=1}^n 4\cdot 2^{k(1-s)}\leq \frac{4}{2^{s-1}-1}\,,
    \end{equation}
and so we conclude that taking a few more than $n$ copies implies that the total ($t$-fold) overlap of a given stabilizer state with all of the others vanishes very quickly.
Next, we define:
\begin{equation}
    \hat{p}(\alpha,\beta):=\sum_{h\in \widehat\Upsilon} p^{(t)}_h \chi_{h}((\alpha,\beta))=\frac{1}{2^{nt}}\sum_{\x} \chi_{h(\x)}((\alpha,\beta))=\bra{\phi_{0,0}}^{\otimes t}\ket{\phi_{\alpha,\beta}}^{\otimes t}=\braket{\phi_{0,0}|\phi_{\alpha,\beta}}^{ t} \,
\end{equation}
We then also have:
\begin{align}
    \sum_{(\alpha,\beta)}|\hat p(\alpha,\beta)|^2&= \sum_{(\alpha,\beta)}\left(\sum_h p^{(t)}_h \chi_{h}((\alpha,\beta))\right)\left(\sum_{h'}p^{(t)}_{h'}\overline{\chi_{h'}((\alpha,\beta))}\right)\\
    &= \sum_{h, h'} p^{(t)}_h p^{(t)}_{h'} \sum_{(\alpha, \beta)} \chi_{h}((\alpha,\beta)) \overline{\chi_{h'}((\alpha,\beta))}\\
    &= \sum_h (p^{(t)}_h)^2|\Upsilon|,
\end{align}
where the last equation follows from character orthogonality.
Since the $\phi_{\alpha,\beta}$ only correspond to the stabilizer states with full support, we have a bound using the set of all stabilizer states:
\begin{align}
|\Upsilon|\sum_h (p^{(t)}_h)^2&=\sum_{(\alpha,\beta)\in \Upsilon} \lvert\braket{\phi_{0,0}|\phi_{\alpha,\beta}}\rvert^{2t}\\
&\leq \sum_{T\in \text{Stab}_n}\lvert\braket{\phi_{0,0}|T}\rvert^{2t}\\
&=1+C_{n,t},
\end{align}
where we recall from above that $C_{n,t}=\sum_{k=1}^n L_n(k)2^{-kt}$ is the total overlap with $\ket{\phi_{0,0}}$ (after taking $t$ copies) of the stabilizer states not equal to $\ket{\phi_{0,0}}$ itself.  
To include the full rank condition again, let $N_h^{\text{aff}} $ only count the number of affine full-rank tuples. Since an $a_{n,t}$-fraction of the $2^{nt}$ tuples pass the affine-rank test, $\sum_h N_h^{\mathrm{aff}}=a_{n,t}2^{nt}$.
Thus, the conditional probabilities are $p_h^{\mathrm{aff}}= N_h^{\mathrm{aff}}/(a_{n,t}2^{nt})$ and
\begin{equation}
|\Upsilon|\sum_h(p^{\text{aff}}_h)^2\leq|\Upsilon|\sum_h \frac{p_h^2}{a_{n,t}^2} \leq \frac{1+C_{n,t}}{a_{n,t}^2}\,.
\end{equation}
\noindent
Finally, we recall that for any probability distribution $\{r_h\}_h$ one can use Hölder's inequality to deduce
\begin{equation}
    1=\sum_h r_h = \sum_h \bigl(r_h^{1/2}\bigr)^{2/3}\bigl(r_h^{2}\bigr)^{1/3}
\le \Bigl(\sum_h r_h^{1/2}\Bigr)^{2/3}\Bigl(\sum_h r_h^{2}\Bigr)^{1/3}\,,
\end{equation}
\noindent
so that
\begin{equation}
      \Big(\sum_h \sqrt{r_h}\Big)^2\geq \frac{1}{\sum_h r_h^2}\,.
\end{equation}
\noindent
Putting everything together, we have the bound
\begin{equation}
p_{\rm succ} =     \frac{\Big(\sum_h \sqrt{p^{\text{aff}}_h}\Big)^2}{|\Upsilon|}\geq \frac{1}{|\Upsilon|\sum_h (p^{\text{aff}}_h)^2}\geq \frac{a_{n,t}^2}{1+C_{n,t}}\geq \frac{(1-2^{1-s})^2}{1+4/(2^{s-1}-1)}\,,
\end{equation}
where the last inequality follows from Eq.~\eqref{cns} and Lemma~\ref{lem:fullsupp}, which concludes the proof of the lemma.
\end{proof}

\bounds*
\begin{proof}
This is mostly a matter of putting together the results of a bunch of lemmas. 
To begin, recall that during the search for a full-support state, every non-full-support rejection is nondestructive, and therefore consumes no copies of the unknown state. The probability that   none of the $r=\lceil 3\ln(40/\delta)\rceil$ sampled Cliffords gives a full-support chart  is, by  Lemma~\ref{lem:fullprob},  
\begin{equation}
(1-p_n)^r\leq \Big(\frac{2}{3}\Big)^r\leq e^{-r/3}= e^{-\lceil 3 \ln(40/\delta)\rceil/3}\leq\frac{\delta}{40}.
\end{equation}
If this occurs we declare the algorithm to have failed; supposing not, we continue.\\

At the first full-support state, the rank test accepts, by Lemma~\ref{lem:fullsupp}, with probability $a_{n,t}>1-2^{n-t+1}=1-2^{1-s}$, and then the QFT gives the correct $(\alpha,\beta)$, by Lemma~\ref{lem:fourier}, with probability $p_{\rm succ}\geq {(1-2^{1-s})^2}/({1+4/(2^{s-1}-1)})$. 
Thus, we succeed with probability at least
\begin{equation}
P:=(1-(1-p_n)^r)\cdot a_{n,t}\cdot\frac{(1-2^{1-s})^2}{1+4/(2^{s-1}-1)}.
\end{equation}
The failure probability is then bounded using the fact that $1-xyz=(1-x)+x(1-y)+xy(1-z)\leq (1-x)+(1-y)+(1-z)$ for $x,y,z\in [0,1]$:
\begin{align}
    1-P &= 1-(1-(1-p_n)^r)\cdot a_{n,t}\cdot\frac{(1-2^{1-s})^2}{1+4/(2^{s-1}-1)}\\
    &\leq (1-(1-(1-p_n)^r))+(1-a_{n,t})+(1-\frac{(1-2^{1-s})^2}{1+4/(2^{s-1}-1)})\\
    &\leq (1-p_n)^r+2^{n-t+1}+\frac{4/(2^{s-1}-1)+2\cdot2^{1-s}-(2^{1-s})^2}{1+4/(2^{s-1}-1)}\\
    &\leq \frac{\delta}{40} +2^{1-s}+\frac{4}{(2^{s-1}-1)}+2\cdot2^{1-s}\\
    &=\frac{\delta}{40}+3\cdot 2^{1-s}+\frac{4}{(2^{s-1}-1)}
\end{align}
Now, since we chose $t=n+4+\lceil\log_2 1/\delta \rceil$, it follows that $\delta\geq 2^{4-s}$, so that
\begin{equation*}
   \frac{1}{\delta}\big( 3\cdot 2^{1-s}+\frac{4}{2^{s-1}-1}\big) \leq \frac 38 + \frac{4}{8-2^{4-s}}\leq \frac 38 + \frac{4}{7} = \frac{53}{56}\,,
\end{equation*}
from which we conclude $1-P\leq \delta/40+53\delta/56=272\delta/280$.
Finally, choose the isotypic compression accuracy (see  Appendix~\ref{sec:fibers}) to be $\zeta=\delta/40$. Before isotypic compression, the state is $\sum_h c_h \ket{\mcf_h}\ket{h}$ for some $\sum_h|c_h|^2=1$.
For each isotypic, the  compression error is bounded as
\begin{equation}
    \bigl\|U_h^\dagger\ket{\mcf_h}-\ket{0^{nt}}\bigr\|_2 = \bigl\|\ket{\mcf_h}-U_h\ket{0^{nt}}\bigr\|_2\leq \zeta.
\end{equation}
Since the different $h$ branches are orthogonal, the error for the entire isotypic compression step also has bounded 2-norm:
\begin{equation}
    \Bigl\|\sum_h c_h(U_h^\dagger\ket{\mcf_h}-\ket{0^{nt}})\ket{h}\Bigr\|_2^2 = \sum_h |c_h|^2\norm{(U_h^\dagger\ket{\mcf_h}-\ket{0^{nt}})\ket{h}}_2^2 \leq \sum_h |c_h|^2\zeta^2 = \zeta^2.
\end{equation}
Thus, the 2-norm of the difference between the ideal and prepared state is at most $\zeta$, so the trace distance between their density matrices is also at most $\zeta$; indeed, for pure states $\ket{\psi}$ and $\ket{\phi}$, recall that the trace distance is $T(\phi,\psi)=\sqrt{1-\lvert\braket{\phi|\psi}\rvert^2},$ and that the squared 2-norm is $\norm{\ket{\phi}-\ket{\psi}}_2^2 =2-2\Re\braket{\phi|\psi}$; we therefore have 
\begin{equation}
    \norm{\ket{\phi}-\ket{\psi}}_2^2-T(\phi,\psi)^2 = 2-2\Re\braket{\phi|\psi}-(1-|\braket{\phi|\psi}|^2) = |1-\braket{\phi|\psi}|^2\geq 0.
\end{equation}
Now, the rest of the (CPTP) operations of the algorithm cannot increase the trace distance. When measuring, this means the resulting classical outcome distributions for the ideal and implemented cases have total variation distance at most $\zeta$. The total variation distance bounds the difference in probability of any particular event, including the event that the learner fails. Thus, the failure probability of the implemented learner can exceed the ideal learner by at most $\zeta$, and is therefore upper bounded by  ${272\delta}/{280}+\zeta={272\delta}/{280}+{\delta}/{40}={279\delta}/{280}< \delta$. Finally, $\log(1/\zeta)=\mco(\log(1/\delta))$, so that by Lemma~\ref{lem:unitary} the circuit size is polynomial in $n$ and $\log(1/\delta)$.

\end{proof}

\compbound*
\begin{proof}
To begin, a general $n$-qubit Clifford can be synthesised with $\mco(n^2)$ elementary Clifford gates and circuit depth $\mco(n/\log n)$~\cite{aaronson2004improved}; we synthesise $t$  copies in parallel, using $\mco(n^2t)$ gates. Next, the affine-rank test   uses reversible Gaussian elimination on an $n\times (t-1)$ binary matrix, which can be implemented using $\mco(n^2t)$ elementary Boolean operations and $\mco(n^2t)$ circuit depth~\cite{albrecht2011efficient,perriello2021a} and   $\mathcal{O}(n^2)$   ancilla qubits~\cite{perriello2021a}. In total then, and as by Lemma~\ref{lem:bounds} it suffices to take $r=\mco(\log(1/\delta))$ sequential trials, the search for the full-support coordinate system uses $\mco(n^2t\log(1/\delta))$ gates, and likewise takes time $\mco(n^2t\log(1/\delta))$.
The isotypic compression step is more costly. By Lemma~\ref{lem:unitary},  our isotypic compression unitary can be implemented to error $\zeta$ in depth $\mco(nt(t+\log(nt/\zeta))^3)$; as we saw in the proof of Lemma~\ref{lem:bounds}, it suffices to take $\zeta\sim\delta$, so that this stage requires depth $\mco(nt(t+\log(nt/\delta))^3)$ (and $(t+\log(nt/\delta))^3$ ancillae to make various computations reversible). 
Now, because $\Upsilon\cong\mbz_4^n\times\mathbb{F}_2^{n(n-1)/2}$, its Fourier transform factorizes as $\mathsf{F}_\Upsilon=
(\mathsf{F}_{\mbz_4})^{\otimes n}\otimes H^{\otimes\binom n2}$~\cite{hoyer1997efficient}, using therefore  $\mco(n^2)$ gates in $\mco(1)$ depth~\cite{cleve2000fast}. Finally, the recovery of the stabilizer tableau of the unknown state can be achieved, as explained in the main text, in time $\mco(n^3)$~\cite{aaronson2004improved}.
So the complexity is dominated by the orbit compression step. Finally, and again by Lemma~\ref{lem:bounds}, we can take $t=\mco(n+ \log(1/\delta))$, so that $t+\log(nt/\delta)\in \mco(t)$, and our total complexity is $\mco(nt^4)=\mco(n(n+\log(1/\delta))^4)$.
\end{proof}

\end{document}